\pdfoutput=1
\documentclass[sigconf]{acmart}






\usepackage[subtle]{savetrees}

\usepackage{needspace}
\usepackage{notation}
\usepackage{subcaption}
\usepackage{algorithmic}
\usepackage{algorithm}
\usepackage{amsthm}
\usepackage{array}
\usepackage{textcomp}
\usepackage{stfloats}
\usepackage{url}
\usepackage{verbatim}
\usepackage{graphicx}
\hyphenation{op-tical net-works semi-conduc-tor IEEE-Xplore}

\usepackage{booktabs}  
\usepackage{hyperref}  
\usepackage{todonotes}
\usepackage{listings}

\usepackage{xspace}
\usepackage{tikz}
\usetikzlibrary{positioning}
\usetikzlibrary{shapes}

\theoremstyle{plain}
\newtheorem{theorem}{Theorem}[section]

\theoremstyle{definition}
\newtheorem{definition}[theorem]{Definition}
\newtheorem{example}[theorem]{Example}

\newtheorem{remark}[theorem]{Remark}




\newcounter{example}

\usepackage{amssymb}      
\usepackage{stmaryrd}
\usepackage{multicol}
\usepackage{pdflscape}
\usepackage{longtable}
\usepackage{booktabs}
\usepackage{enumitem}

\usepackage{tcolorbox}

\newcommand{\CASE}[1]{\STATE \textbf{case} #1\textbf{:} \begin{ALC@g}}
\newcommand{\ENDCASE}{\end{ALC@g}}

\newcommand{\DEFAULT}{\STATE \textbf{default:} \begin{ALC@g}}
\newcommand{\ENDDEFAULT}{\end{ALC@g}}
\newcommand{\DEFAULTLINE}[1]{\STATE \textbf{default:} }

\lstset{basicstyle=\ttfamily,breaklines=true}

\title{An~STREL-based~Formulation~of~Spatial~Resilience~in Cyber-Physical Systems}

\author{Zeyu Zhang}
\email{zeyu.zhang.2@stonybrook.edu}
\affiliation{
  \institution{
  Stony Brook University}
  \city{Stony Brook, NY}
  \country{USA}}
\author{Hongkai Chen}
\email{hkchen@ie.cuhk.edu.hk}
\affiliation{
  \institution{
  The Chinese University of Hong Kong}
  \city{Hong Kong SAR}
  \country{China}}
\author{Nicola Paoletti}
\email{nicola.paoletti@kcl.ac.uk}
\affiliation{
  \institution{
  King’s College London}
  \city{London}
  \country{UK}}
\author{Shan Lin}
\email{shan.x.lin@stonybrook.edu}
\affiliation{
  \institution{
  Stony Brook University}
  \city{Stony Brook, NY}
  \country{USA}}
\author{Scott A. Smolka}
\email{sas@cs.stonybrook.edu}
\affiliation{
  \institution{
  Stony Brook University}
  \city{Stony Brook, NY}
  \country{USA}}

\newcommand{\wfun}{W}
\newcommand{\nextto}[3]{#1\stackrel{#2}{\mapsto}#3}
\newcommand{\route}{\tau}

\newcommand{\ssign}{\mathbf{s}}

\newcommand{\lserv}{\Sigma}


\newcommand{\fmon}{\mathbf{m}}

\newcommand{\reach}{\mathcal{R}}
\newcommand{\escape}{\mathcal{E}}

\newcommand{\sr}{S}
\newcommand{\spresv}{\sigma}
\DeclareMathOperator*{\minre}{\mathop{\min}_{\mathit{re}}}
\DeclareMathOperator*{\maxre}{\mathop{\max}_{\mathit{re}}}
\newcommand{\precre}{\mathrel{\prec_{\mathit{re}}}}
\newcommand{\succre}{\mathrel{\succ_{\mathit{re}}}}
\newcommand{\eqre}{\mathrel{=_{\mathit{re}}}}



\definecolor{lavander}{cmyk}{0,0.48,0,0}
\definecolor{violet}{cmyk}{0.79,0.88,0,0}
\definecolor{burntorange}{cmyk}{0,0.52,1,0}


\copyrightyear{2026}
\acmYear{2026}
\setcopyright{acmlicensed}\acmConference[HSCC' 26]{Submission of the 29th ACM International Conference on Hybrid Systems: Computation and Control}{May 11--14, 2026}{Saint Malo, France}
\acmBooktitle{Submission of the 29th ACM International Conference on Hybrid Systems: Computation and Control (HSCC' 26), May 11--14, 2026, Saint Malo, France}
\acmPrice{}
\acmDOI{}
\acmISBN{}

\begin{document}

\begin{abstract}
Resiliency is the ability of a system to quickly recover from a violation (recoverability) and avoid future violations for as long as possible (durability). In the spatial setting, recoverability and durability (now known as \emph{persistency}) are measured in units of distance.  Like its temporal counterpart, spatial resiliency is of fundamental importance for Cyber-Physical Systems (CPS) and yet, to date, there is no widely agreed-upon formal treatment of spatial resiliency. 

We present a formal framework for reasoning about spatial resiliency in CPS. Our framework is based on the spatial fragment of Spatio-Temporal Reach and Escape Logic (STREL), which we refer to as SREL. In this framework, spatial resiliency is given a syntactic characterization in the form of a \emph{Spatial Resiliency Specification} (SpaRS).  An atomic predicate of SpaRS is called an S-atom. Given an arbitrary SREL formula $\varphi$, distance bounds $d_1, d_2$, the S-atom of $\varphi$, $S_{d_1, d_2} (\varphi)$, is the SREL formula $\neg\varphi\reach_{[0,d_1]} (\varphi \reach_{[d_2, +\infty)}\varphi)$, specifying that recovery from a violation of $\varphi$ occurs within distance $d_1$ (\emph{recoverability}), and subsequently that $\varphi$ be maintained along a route for a distance greater than $d_2$ (\emph{persistency}). S-atoms can be combined using spatial STREL operators, allowing one to express composite resiliency specifications; e.g., multiple S-atoms must hold, or multiple locations must satisfy an S-atom.
We define a quantitative semantics for SpaRS in the form of a \emph{Spatial Resilience Value} (SpaRV) function $\spresv$ and prove its soundness and completeness w.r.t.\ SREL's Boolean semantics.
The $\spresv$-value for $S_{d_1,d_2}(\varphi)$ is a set of non-dominated $(rec, per)$ pairs, quantifying recoverability and persistency, given that some routes may offer better recoverability while some may provide better persistency. In addition, we design algorithms to evaluate SpaRV for SpaRS formulas. Finally, two case studies demonstrate the practical utility of our approach.
\end{abstract}


\maketitle

\section{Introduction}


Resilience is commonly understood as an ability to recover from or adjust easily to adversity or change~\cite{dictionary_resiliency}. In ~\cite{chen2022resilience}, a framework for temporal resilience in Cyber-Physical Systems (CPS) is presented, which takes into account how quickly a system recovers from a property violation (\emph{recoverability}), and how long post-recovery the system maintains the property in question (\emph{durability}).

Nonetheless, many CPS are also inherently \emph{spatial}: they operate in physical environments where resources, hazards, and operational conditions are distributed across space.  Correspondingly, once a violation of a property $\varphi$ occurs at a location, spatial resilience requires the system to be able to relocate to another location at which
$\varphi$ holds (within a spatial constraint), and subsequently visit consecutive spatial locations at which $\varphi$ holds (also spatially constrained). We refer to these two aspects of spatial resilience as  \emph{recoverability} and \emph{persistency}, respectively. Example~\ref{example 1} illustrates spatial resiliency using a solar power-driven rover.

To reason formally about the correctness of spatial behavior, Spatio-Temporal Reach and 
Escape Logic (STREL)~\cite{2022STREL} extends Signal Temporal Logic (STL) with \emph{reach} and \emph{escape} operators.  Specifically, the \emph{reach} operator $\varphi_1 \reach_{[d_1, d_2]}^f \varphi_2$ describes the behavior of reaching a location satisfying $\varphi_2$ through a route $\tau$ such that the length of this route belongs to the interval $[d_1, d_2]$, and $\varphi_1$ is satisfied at all locations.
The \emph{escape} operator $\escape_{[d_1,d_2]}^f$ describes the possibility of escaping from a certain region via a route that passes only through locations satisfying $\varphi_1$, such that the distance between the starting location and the last location belongs to the interval $[d_1, d_2]$. STREL has been applied 
to mobile ad-hoc networks, transportation systems, and sensor networks, providing 
a powerful framework for specifying spatio-temporal properties.  STREL, however, is not designed to quantify resilience: it captures spatial reachability and 
safety but not how long (distance-wise) it takes a system to recover from a violation (recoverability) or for how long it can sustain a desired condition 
(persistency).  This limitation motivates the need for a new logic that integrates resilience into spatial reasoning.  

\begin{example}[Solar power-driven rover]
\label{example 1}
    \begin{figure}[t]
        \centering
        \includegraphics[width=0.8\linewidth]{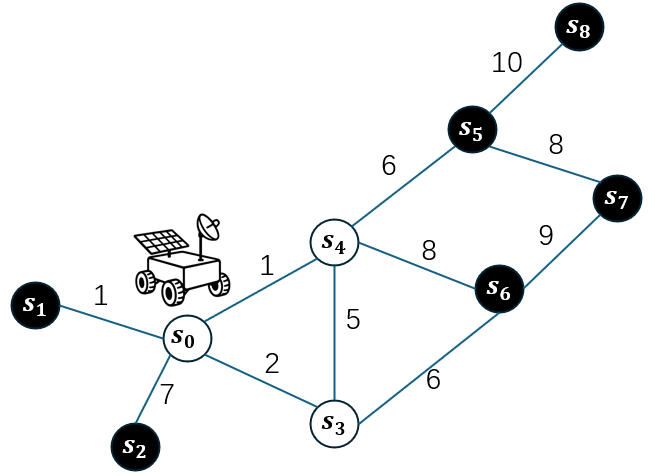}
        \caption{Illustration of a solar power-driven rover and the corresponding spatial distribution of solar power.
        Locations satisfying the property  \texttt{$\varphi= \text{solar\_power}\geq1\text{kW}$} are shown as black nodes; white nodes indicate locations where the property does not hold.} 
        \label{fig:example}
    \end{figure}

Consider a rover tasked with exploring a new region. It relies on an onboard battery as its sole energy source, which, when fully charged, allows the rover to travel up to 10~km.
The battery can be recharged using solar power, but the available solar power varies by location.
Assume the rover can only be fully recharged at locations where the solar power is at least 1 kW, as specified by the (STREL) formula \texttt{$\varphi= \text{solar\_power}\geq1\text{kW}$}. 
Figure~\ref{fig:example} illustrates the region as an A-spatial model to represent the region, where black nodes represent locations satisfying $\varphi$ and white nodes indicate locations where $\varphi$ does not hold. 
Edge weights denote the distances between nodes in kilometers. 
The rover begins at node $s_0$ with enough battery to travel 9~km. 
Since it cannot recharge at this starting location, we must evaluate its spatial resilience w.r.t.\ $\varphi$ at $s_0$ as follows:

\noindent
\emph{Recoverability}:  The rover must reach a charging node (where $\varphi$ holds) depleting its battery, i.e., traveling 7~km to $s_5$ or 8~km to $s_6$. 
Recoverability quantifies the minimum distance the rover needs to travel to a recharging node. 
For example, from $s_0$, the rover can take the route $s_0\rightarrow s_3\rightarrow s_6$, covering 8 km. 
Thus, the recoverability for this route is 8 km. 

\noindent
\emph{Persistency}. Post-recovery, the rover should be able to continue
exploring along a route comprising only charging nodes.
Persistency thus quantifies how far the rover can travel post-recovery by
visiting locations satisfying $\varphi$. 
Persistency post-recovery at $s_6$ is 27 km, since it can travel along $s_6 \rightarrow s_7 \rightarrow s_5 \rightarrow s_8$ and the distance of this route is 27~km.
\end{example}

\paragraph{Our Contributions.} In this paper, we introduce a formal framework for reasoning about spatial resiliency in CPS. 
Our approach builds on the spatial fragment of STREL, excluding temporal operators, which we refer to as SREL.
Within this framework, spatial resiliency is syntactically described using \emph{Spatial Resiliency Specifications} (SpaRS), whose atomic elements we call S-atoms.
Given any SREL formula $\varphi$ and distance parameters $d_1, d_2$, we define the concept of S-atom $S_{d_1, d_2} (\varphi)$ as the SREL formula $\neg\varphi\reach_{[0,d_1]} (\varphi \reach_{[d_2, +\infty)}\varphi)$.
This specification requires that the system can reach a location where $\varphi$ holds within a distance of $d_1$ (recoverability), and that $\varphi$ continues to hold for at least a further distance of $d_2$ (persistency). 
S-atoms can be composed using SREL operators to express more complex spatial resilience requirements such as conjunctions (e.g., $S_{d_1, d_2}(\varphi_1) \wedge S_{d_3,d_4}(\varphi_2)$), or the requirement that every location satisfies an S-atom (e.g., $\boxbox_{[d_3,d_4]} S_{d_1,d_2}(\varphi)$).
To capture the quantitative aspects of resilience, we introduce the \emph{Spatial Resilience Value} (SpaRV) function $\spresv$, which provides a set of $(rec, per)$ pairs for each S-atom.
Here, $rec$ measures how much shorter the actual recovery route is compared to the bound $d_1$, while $per$ measures how much longer the persistency route is compared to $d_2$.
We prove that our semantics for SpaRS is both sound and complete with respect to the underlying STREL semantics.

Importantly, our framework does not impose any preference or aggregation between recoverability and persistency.
As a result, the semantics may yield multiple non-dominated $(rec, per)$ pairs, reflecting different trade-offs between these two aspects. In such cases, we retain all non-dominated pairs to provide a thorough, assumption-free characterization of spatial resilience, using Pareto optimization to derive the semantics from subformulas.
For instance, in Example~\ref{example 1}, starting from $s_0$, there are distinct, non-dominated recovery-persistency paths: one recovers via $s_0 \rightarrow s_4 \rightarrow s_5$ (distance 7) and persists through $s_5 \rightarrow s_7 \rightarrow s_6$ (distance 17); another recovers $s_0 \rightarrow s_3 \rightarrow s_6 \rightarrow s_7$ (distance 8) and persists along $s_6 \rightarrow s_7 \rightarrow s_5 \rightarrow s_8$ (distance 27). 
The former achieves better recoverability (shorter recovery), while the latter yields better persistency (longer sustained satisfaction).

We also present an exact algorithm for computing the SpaRV of a SpaRS formula. For the S-atoms, we use the Dijkstra's algorithm to calculate the recoverability and depth-first search to calculate the persistency. For the Reach and Escape operators, we adopt a flooding-based algorithm.
To demonstrate the practicality and flexibility of our approach, we conduct comprehensive experiments on two case studies: networked microgrids and an urban bike-sharing system. 
In both scenarios, we formalize key operational requirements in SREL and evaluate their SpaRVs under various SpaRS specifications. 
Our findings highlight the expressive power and versatility of our spatial resilience framework.


\section{Preliminaries}

In this section, we introduce the A-spatial model and a fragment of the Spatial-Temporal  Reach and Escape Logic~\cite{2022STREL}.

\subsection{A-Spatial Model}
We adopt the definition of the A-spatial model from Nenzi et al.~\cite{2022STREL}. An A-spatial model is a graph with edges having weights from a set $A$.
We consider undirected weighted graphs.
\begin{definition}[A-spatial model~\cite{2022STREL}] 
    An \emph{A-spatial model} $\lserv$ is a pair $(L, \wfun)$ where:
    \begin{itemize}
        \item $L$ is a finite set of \emph{locations}; $L$ is sometimes referred to as the \emph{spatial universe};
        \item $\wfun\subseteq L\times A\times L$ is a \emph{proximity function} associating at most one label $w \in A$ with each distinct pair $l_1, l_2\in L$. 
    \end{itemize} 
\end{definition}

We will equivalently write  $(l_1,w,l_2)\in W$ as $W(l_1,l_2)=w$ or $\nextto{l_1}{w}{l_2}$, meaning that $l_1$ is \emph{next to} $l_2$ with weight $w \in A$.

\begin{definition}[Route~\cite{2022STREL}]  
Let $\lserv=\langle L,\wfun \rangle$ be an A-spatial model.  A \emph{route} $\route$ through $\lserv$
is a sequence $l_0 l_1\cdots l_k$ such that $l_i \in L$ and $(l_i, w, l_{i+1}) \in W$ and no edge $(l_i, w, l_{i+1})$ occurs more than once (but repeated locations are allowed). A \emph{simple route} is a route where no location repeats.
\end{definition}

Let $\route=l_0 l_1\cdots l_k$ be a route, $i\in \mathbb{N}$ and $l_i \in L$. We adopt the following notation:
\begin{itemize}
\item $\route[i]$ denotes the $i$-th node in the route, namely $l_i$.
\item $\route[..i]$ denotes the prefix of the route up to and including the i-th node, i.e., $\route=l_0 l_1\cdots l_i$.
\item $\tau[i..]$ denotes the suffix of the route starting from the $i$-th node, i.e., $l_i l_{i+1} \cdots l_k$.
\item We write $l \in \route$ if there exists an index $i$ such that $\route[i]=l$.
\item Conversely, we write $l\not\in \route$ if no such index exists, that is, if the label $l$ does not appear in the route.

\end{itemize}


Note that since the spatial universe $L$ is finite and there are no repeated edges in a route, the number of routes in a spatial model is finite, and the number of edges in a route $\route$ (denoted $|\route|$) is also finite. We use $Routes(\lserv)$ to denote the set of routes in $\lserv$, while $Routes(\lserv,l)$ denotes the set of routes starting from $l \in L$.

\begin{definition}[Distance Domain~\cite{2022STREL}]
\label{def:distDom}
We define a distance domain as a structure  $(D,\bot_D,\top_D,+_{D},\leq_{D})$ where:
\begin{itemize}
    \item $D$ is a set equipped with a total order $\leq_{D}$.
    \item $\bot_{D}$ is the least element (minimum) of $D$ under $\leq_{D}$.
    \item $\top_{D}$ is the greatest  element (maximum) of $D$ under $\leq_{D}$.
    \item $(D,\bot_D,+_{D})$ forms a monoid, i.e., $+_{D}$ is an associative binary operation with identity $\bot_D$.
\end{itemize}
\end{definition}

\begin{definition}[Distance functions and Route Length~\cite{2022STREL}]
\label{def:distance}
  Let \(\lserv = (L, W)\) be an \(A\)-spatial model, and \((D, \bot_D, \top_D, +_D,
  \leq_D)\) be a distance domain.
  A \emph{distance function} \(f: A \to D\) is a mapping that assigns to each edge weight in $A$ a value in the distance domain. Given the distance function $f$, the \emph{route length}
  \(d^f: Routes(\lserv) \to D\) is defined as:

  \begin{displaymath}
    d^f(\tau) =
    \begin{cases}
      \sum_{i=1}^{|{\tau}|} f(W(\tau[i-1], \tau[i])) & ~\text{if}~ |\tau| \geq 1 \\
      \bot_D                                             & ~\text{otherwise},
    \end{cases}
  \end{displaymath}
  where the summation is defined over the monoid \(+_D\). In this paper, we use the distance domain
$ (\mathbb{R}_{\ge 0}\cup\{+\infty\},\ 0,\ +\infty,\ +,\ \le)$.

\end{definition}

We use the notation \(d^f_\tau[i]\) to refer to \(d^f(\tau[..i])\), i.e., the
distance of the prefix of a route \(\tau\) up to the \(i\)th location on
the route, and we use $d_{\tau}^f[i..]$ to refer to $d_{\tau}^f([i..])$. We use~0 for the index $i$ of the first node in a route. 
We define the distance between two locations $l_1$ and $l_2$ in $\lserv$ to be the length of the shortest route connecting the two locations:
\begin{displaymath}
  d^f_{\lserv}[l_1,l_2] = \min\Set{d^f_\tau[i] \given \tau \in Routes(\lserv,l_1),
    \tau[i] = l_2}.
\end{displaymath}

\begin{definition}[Spatial signal] Let $L$ be a \emph{spatial universe}, and $\mathbb{R}$ be the set of real numbers. A \emph{spatial signal} is a function $\ssign: L\rightarrow \mathbb{R}$.  
\end{definition}

\subsection{Spatial Reach and Escape Logic (SREL)}
\label{SEC:SREL}

STREL~\cite{2022STREL} is a logic for specifying the spatial and temporal behavior of a spatio-temporal signal over an A-spatial model. In this paper, we consider static A-spatial models and spatial signals that do not change over time. Thus, we consider the \emph{spatial} fragment of STREL as our basis, which we call SREL.

\begin{definition}[SREL syntax~\cite{2022STREL}]
    \begin{equation*}
    \varphi := 
    \mu \mid \neg \varphi \mid \varphi_1 \land \varphi_2 
    \mid \varphi_1 \reach_{[d_1, d_2]}^{f} \varphi_2
    \mid \escape_{[d_1, d_2]}^{f} \varphi
  \end{equation*}
\end{definition}

Here, \(\mu\) is an atomic predicate, negation \(\neg\), conjunction \(\land\), and disjunction \(\lor\) are the standard Boolean connectives. 
\emph{Reach} $\reach$ and \emph{Escape} $\escape$ are spatial operators, where $d_1, d_2 \in D$ represents distances in the distance domain $D$, and $f$ is a distance function mapping to $D$.
The \emph{Reach} operator \(\varphi_1 \reach_{[d_1,d_2]}^f \varphi_2\) specifies reaching a location satisfying $\varphi_2$ through a route $\tau$ such that $d_1 \leq d^f(\tau) \leq d_2$ and $\varphi_1$ is satisfied at all locations along the route.
The \emph{Escape} operator \(\escape_{[d_1, d_2]}^f \varphi\) specifies escaping from a certain region via a route that passes only through locations that satisfy \(\varphi\) and the distance between the starting location $l$ and the last location $l'$ satisfies $d_1\leq d_{\lserv}^f[l,l']\leq d_2$.
Note that for \emph{Escape}, the distance constraint is not necessarily over the satisfying route, but the shortest route between the start and end locations. For \emph{Reach}, the distance constraint is with respect to the length of the satisfying route. 

We also consider two other spatial operators, the Somewhere
$\diamonddiamond$ and Everywhere $\boxbox$ operators, in our case studies:
\begin{equation*}
    \diamonddiamond_{[0, d]}^f \varphi = \top \Reach_{[0, d]}^f \varphi , \quad
    \boxbox_{[0, d]]}^f \varphi = \neg \diamonddiamond_{[0, d]]}^f \neg
        \varphi
\end{equation*}
Somewhere operator $\diamonddiamond_{[0, d]}^f \varphi$ specifies, from a certain origin, a reachable destination that satisfies $\varphi$ through a route whose length belongs to the interval $[0,d]$. 
Everywhere operator $\boxbox_{[0, d]}^f \varphi$ specifies that all routes, starting from a certain location, of length within $[0, d]$ lead to locations where $\varphi$ holds.

\begin{definition}[SREL qualitative and quantitative  semantics~\cite{2022STREL}]
\label{def: srel semantics}
Let $\lserv=(L,W)$ be an $A$-spatial model, $\xi$ an spatial signal over $L$,  $l\in L$, and $\varphi$ an SREL formula.  
Function $\lambda$ maps $\xi(l)$ to a value in $\{\top, \bot\}$, associated with the atomic predicate $\mu$.
The satisfaction relation $(\lserv, \xi, l) \models \varphi$, indicating that signal $\xi$ satisfies $\varphi$ at location $l$, is defined as follows:

\noindent
{\footnotesize
\begin{minipage}[b]{1.\linewidth}

\vspace{0.1cm}
\begin{tabular}[b]{l@{\hspace{2.5pt}}c@{\hspace{2.5pt}}l}

$(\lserv, \xi, l)\models \mu$ & $\!\Leftrightarrow$\! & $\lambda(\xi(l))= \top$ \\[.2cm]

$(\lserv, \xi,  l)\models \neg \varphi$ & $\!\Leftrightarrow\!$ & $\neg((\lserv, \xi, l) \models \varphi)$ \\[.2cm]

$(\lserv, \xi, l) \models \varphi_1 \wedge \varphi_2$ & $\!\Leftrightarrow\!$ & $(\lserv, \xi, l)\models \varphi_1 \wedge (\lserv, \xi, l)\models \varphi_2 $  \\[.2cm]  

$(\lserv, \xi, l)\models \varphi_1 \reach_{[d_1,d_2]}^f \varphi_2$ & $\!\Leftrightarrow\!$ & $\exists \tau\in Routes(\lserv, l), i\in\mathbb{N}  \; \text{s.t.}\; d_\tau^f[i]\in[d_1, d_2]$ \\
&&$ \wedge (\lserv, \xi, \tau[i]) \models \varphi_2 \wedge \forall j<i, (\lserv, \xi, \tau[j])\models \varphi_1$ \\[.2cm] 
 
$(\lserv, \xi, l)\models \escape_{[d_1,d_2]}^f \varphi$ & $\!\Leftrightarrow\!$ & $\exists \tau\in Routes(\lserv, l), i\in\mathbb{N} \;\text{s.t.}\; d_\lserv^f[l, \tau[i]]\in [d_1, d_2]$ \\

&&$\wedge \forall j\leq i, (\lserv, \xi, \tau[j]) \models \varphi$\\

\end{tabular}

\end{minipage}
	\label{tab:STREL semantics}
}
\vspace{-0.2cm}

The quantitative semantics $\mathbf{n}(\lserv, \xi, \varphi, l)$ is recursively defined as follows. 
Function $v$ maps $\xi(l)$ to a value in $\mathbb{R}$, associated with $\mu$.

\noindent
{\footnotesize
\begin{minipage}[b]{1.\linewidth}
\vspace{0.1cm}
\begin{tabular}[b]{rcl}


$\mathbf{n}( \lserv, \xi, \mu, l)$ & $=$ & $v(\xi(l))$ \\[.2cm]

$\mathbf{n}(\lserv, \xi, \neg\varphi, l)$ & $=$ & $ -\mathbf{n}(\lserv, \xi, \varphi, l)$  \\[.2cm]

$\mathbf{n}( \lserv, \xi, \varphi_1 \wedge \varphi_2, l)$ & $=$ & $ \min(\mathbf{n}( \lserv, \xi, \varphi_1, l) , \mathbf{n}(\lserv, \xi, \varphi_2, l) )$  \\[.2cm]  

      

 $\mathbf{n}(\lserv, \xi, \varphi_{1} \: $ $\reach_{[d_{1},d_{2}]}^f \: \varphi_{2}, l)$ & $=$ \\ 
 
\multicolumn{3}{r}{
$\max\limits_{\tau\in Routes(\lserv,l)}
        ~~\max\limits_{i : \left(d_{\tau}^{f}[i] \in [d_{1},d_{2}]\right)}
        \left(
            \min (\mathbf{n}(  \lserv, \xi, \varphi_2,  \tau[i]),
            
            \min\limits_{j < i} 
                \mathbf{n}(  \lserv, \xi, \varphi_1,  \tau[j]) )        %
        \right)$} \\[.2cm]

$\mathbf{n}( \lserv, \xi,$ $\escape_{[d_{1},d_{2}]}^{f} \: \varphi, l)$ & $=$ & \\
\multicolumn{3}{r}{
$\max\limits_{\tau\in Routes(\lserv,l)}
        ~~\max\limits_{i: \left(d_{\lserv}^{f}[l,\tau[i]] \in [d_{1},d_{2}]\right)}
            ~~\min\limits_{j \leq i} 
                \mathbf{n}(  \lserv, \xi, \varphi,  \tau[j])$} \\
\end{tabular}
\end{minipage}
	\label{tab:STREL semantics}
}

\end{definition}

\section{Specifying  Spatial Resilience}
In this section, we introduce our Spatial Resilience Specification (SpaRS) language and its quantitative semantics in terms of non-dominated recoverability-persistency pairs. 


\subsection{Spatial~Resilience~Specification~(SpaRS) Language }
We introduce SpaRS to reason about the resilience of SREL formulas. Given an SREL formula $\varphi$, the spatial resiliency should capture both recoverability and persistency. For this purpose, we use 
$\sr_{d_1, d_2} (\varphi) \equiv \neg \varphi \reach_{[0, d_1]} (\varphi \reach_{[d_2, \infty)} \varphi)$ 
to formalize this notion and we will call it an S-atom. 
An S-atom describes the requirement if $\varphi$ is violated at the current node, it can be recovered at another node within distance $d_1$; also, it can be subsequently maintained on a route of length at least $d_2$. 
We give the syntax of our SpaRS logic as follows.

\begin{definition}[Spatial Resiliency Specification (SpaRS)]

    \begin{equation*}
    \psi := 
     \sr_{d_1, d_2} (\varphi) \mid 
     \neg \psi \mid \psi \land \psi 
    \mid \psi \reach_{[d_1, d_2]}^{f} \psi
    \mid \escape_{[d_1, d_2]}^{f} \psi
  \end{equation*}
  where $\varphi$ is an SREL formula, $\sr_{d_1, d_2} (\varphi) \equiv \neg \varphi \reach_{[0, d_1]} (\varphi \reach_{[d_2, \infty)} \varphi)$, $d_1, d_2 \in \mathbb{R}_{\geq 0}$, and $d_1 \leq d_2$. 
  
\end{definition}

\begin{remark}
    Why do we define $\sr_{d_1, d_2} (\varphi) = \neg \varphi \reach_{[0,d_1]} (\varphi \reach_{[d_2,\infty)} \varphi)$ instead of $\sr_{d_1, d_2} (\varphi) = \neg \varphi \reach_{[0, d_1]} \escape_{[d_2, \infty)} \varphi$ as the atomic proposition of SpaRS?
    The main reason is that the operators $\reach$ and $\escape$ use different distance metrics in their subscripts. 
    The $\reach$ operator uses the ``route length'' while the $\escape$ operator uses the ``distance between two locations'' as described in Definition~\ref{def:distance}. For example, consider Figure~\ref{fig:example} and two formulas $\neg \varphi \reach_{[0,9]} (\varphi \reach_{[25,\infty)} \varphi)$ and $ \neg \varphi \reach_{[0, 9]} \escape_{[25, \infty)} \varphi$ evaluated at location $s_0$. The former expression is satisfied because it can recover to $s_6$ through route $s_0\rightarrow s_3 \rightarrow s_6$ of length $8<9$, then persist along route $s_6\rightarrow s_7\rightarrow s_5\rightarrow s_8$ of length $27>25$. The latter expression is violated since we cannot find a route post-recovery such that the distance $d^f_{\lserv}[l_1,l_2]$ between the first location $l_1$ and the last location $l_2$ of this route is at least 25. The most promising route is $s_6\rightarrow s_7\rightarrow s_5\rightarrow s_8$; however, $d^f_{\lserv}[s_6,s_8] = 8+6+10=24<25$.

\end{remark}

We next provide a quantitative semantics for SpaRS specifications in the form of \emph{Spatial Resilience Value (SpaRV)}. Intuitively, SpaRV quantifies the extent to which  recoverability and persistency are satisfied. More precisely, it produces a non-dominated set of pairs $(x_r, x_p)\in \mathbb{Z}^2$, where (in the atomic case) $x_r$ quantifies how close recovery occurs compared with distance $d_1$, and $x_p$ quantifies, post-recovery, how far the property is maintained compared with distance $d_2$. We further demonstrate the soundness and completeness of the SpaRV-based semantics w.r.t. the SREL Boolean interpretation of spatial resiliency specifications. 
We first define methods for comparing two recoverability-persistency pairs.

Following ~\cite{chen2022resilience}, we adopt the notion of ``resilience dominance'' captured by the relation $\succre$ in $\mathbb{Z}^2$. This is needed because using the standard Pareto-dominance relation $\succ$ would result in an ordering of SpaRV pairs that is inconsistent with the Boolean satisfiability viewpoint. Consider the pairs $(-3,2)$ and $(1,1)$. By Pareto-dominance, $(-3, 2)$ and $(1,1)$ are mutually non-dominated, but a SpaRV of $(-3,2)$ indicates that the system does not satisfy recoverability; namely, the recovery occurs 3 units farther than the bound. On the other hand, a SpaRV of $(1,1)$ implies satisfaction of both recoverability and persistency bounds, and thus should be preferred to $(-3,2)$. We formalize this intuition as follows.

\begin{definition}[Resiliency Binary Relations~\cite{chen2022resilience}]\label{def:binaryrelation}
We define binary relations $\succ_{re}$, $=_{re}$, and $\prec_{re}$ in $\mathbb{Z}^2$.  Let $x,y\in \mathbb{Z}^2$ with $x=(x_r,x_p)$, $y=(y_r,y_d)$, and \emph{sign} is the signum function. 
We have that $x \succ_{re} y$ if one of the following holds:
\begin{enumerate}
    \item $sign(x_r)+sign(x_p) = sign(y_r)+sign(y_d)$, and $x \succ y$.
    \item $sign(x_r)+sign(x_p) > sign(y_r)+sign(y_d)$.
\end{enumerate}
We denote by $\prec_{re}$ the dual of $\succ_{re}$. If neither $x\succ_{re} y$ nor $y \prec_{re} x$,\footnote{This is equivalent to stating that $sign(x_r)+sign(x_p) = sign(y_r)+sign(y_d)$, and that neither $x\succ y$ nor $y \succ x$.} then $x$ and $y$ are \emph{mutually non-dominated}, denoted $x =_{re} y$.
Under this ordering, a \emph{non-dominated set} $S$ is such that $x=_{re} y$ for all $x,y \in S$. 
\end{definition}


\begin{definition}[Maximum and Minimum Resilience Sets~\cite{chen2022resilience}]
    Given $P\subseteq \mathbb{R}^2$ with $P\not=\emptyset$, the maximum resilience set of $P$, denoted $\maxre(P)$, is the largest subset of $P$ such that $\forall x\in \maxre(P)$ and $\forall y\in P$, we have $x\succre y$ or $x\eqre y$. 
    The minimum resilience set of $P$, denoted $\minre(P)$, is the largest subset of $P$ such that $\forall x\in \maxre(P), \forall y\in P$, such that $x\precre y$ or $x\eqre y$.
\end{definition}

Next, we formally introduce the semantics for our SpaRS logic. Its definition makes use of maximum and minimum resilience sets. Our semantics produces non-dominated sets, which implies that all pairs in such a set are equivalent from a Boolean satisfiability standpoint. This is because $x_r>0$ ($x_p>0$) implies Boolean satisfaction of the recoverability (persistency) portion of an $S_{d_1, d_2} (\varphi)$ expression. This property will be useful in Theorem ~\ref{theorem: soundness and completeness}, where we show that our semantics is sound with respect to the Boolean semantics of SREL.

\begin{definition}[Spatial Resilience Value (SpaRV)] 
\label{sprs quantitative semantics}
Let $\lserv=(L,W)$ be an $A$-spatial model, $\xi$ an spatial signal over $L$, $l\in L$, and $\psi$ a SpaRS specification.
The SpaRV $\spresv(\lserv, \xi, \psi, l)$ of $\psi$ with respect to $\xi$ at $l$ is given as a set of distance pairs $(x_r, x_p)\in\mathbb{R}^2$. 

For $\psi$ an S-atom of the form $\sr_{d_1,d_2}(\varphi), \varphi$ an SREL formula, 

\noindent
{\footnotesize
\begin{align}
\spresv(\lserv, \xi, \sr_{d_1, d_2} (\varphi), l)
&= \maxre_{\tau\in Routes(\lserv, l)}
\{(-\infty,-\infty)\} \cup \big\{(x_r, x_p) ~\mid~ \exists i \;\text{s.t.} \nonumber\\
\multicolumn{2}{r}{
$\qquad
\forall j<i,\; (\lserv, \xi, \tau[j]) \models \neg \varphi
\;\land\;
\forall k\geq i,\; (\lserv, \xi, \tau[k]) \models \varphi
\big\}$
}
\label{equ:sprs-atom-semantics}
\end{align}
}

where $x_r=d_1-d_{\tau}^f[i]$, and $x_p=d_{\tau}^f[i..]-d_2$.


The SpaRV of a composite SpaRS formula is defined inductively as follows.
\vspace{-0.2cm}
    \begin{table}[!h]
    \hspace*{-0.4cm}
{\footnotesize
\begin{minipage}[b]{1.\linewidth}

\begin{tabular}[b]{rcl}


$\spresv(\lserv, \xi, \neg\psi, l)$ & $=$ & $ \{(-x,-y): (x,y) \in \spresv(\lserv, \xi, \psi, l) \}$  \\[.1cm]

$\spresv( \lserv, \xi, \psi_1 \wedge \psi_2, l)$ & $=$ & $\minre (\spresv( \lserv, \xi, \psi_1, l) \cup  \spresv(\lserv, \xi, \psi_2, l))$  \\[.1cm]

$\spresv( \lserv, \xi, \psi_1 \vee \psi_2, l)$ & $=$ & $\maxre (\spresv( \lserv, \xi, \psi_1, l) \cup  \spresv(\lserv, \xi, \psi_2, l))$  \\[.1cm]

 $\spresv(\lserv, \xi, \psi_{1} \: \reach_{[d_{1},d_{2}]}^f \: \psi_{2}, l)$ & $=$ \\[-0.1cm]
 
\multicolumn{3}{r}{

$\maxre\limits_{\tau\in Routes(\lserv,l)}
        ~~\maxre\limits_{i : \left(d_{\tau}^{f}[i] \in [d_{1},d_{2}]\right)}
    \minre\left(
            (\spresv(  \lserv, \xi, \psi_2,  \tau[i]),
            
            \minre\limits_{j<i}(
                \spresv(  \lserv, \xi, \psi_1,  \tau[j]) )         %
        \right)$} \\[-0.05cm]
        
$\spresv( \lserv, \xi,\escape_{[d_{1},d_{2}]}^{f} \: \psi, l)$ & $=$ & \\

\multicolumn{3}{r}{

$\maxre\limits_{\tau\in Routes(\lserv,l)}
        ~~\maxre\limits_{i : \left(d_{\lserv}^{f}[l,\tau[i]] \in [d_{1},d_{2}]\right)}
            ~~\minre\limits_{j \leq i} 
                (\spresv(  \lserv, \xi, \psi,  \tau[j]) )$ \hspace{.3cm} } \\

$\spresv(\lserv, \xi, \diamonddiamond_{[d_1, d_2]} \psi)$ &=& $\maxre\limits_{\tau \in Routes(\lserv, l)} \maxre\limits_{i : \left(d_{\tau}^{f}[i] \in [d_{1},d_{2}]\right)} \spresv(\lserv, \xi, \psi, \tau[i])$  \\

$\spresv(\lserv, \xi, \boxbox_{[d_1,d_2]} \psi) $ &= & $\minre\limits_{\tau \in Routes(\lserv, l)} \minre\limits_{i : \left(d_{\tau}^{f}[i] \in [d_{1},d_{2}]\right)} \spresv(\lserv, \xi, \psi, \tau[i])$ 
\end{tabular}

\end{minipage}
	\label{tab:SpREL quantitative Semantics}
	}
\end{table}

\end{definition}


We first look at Eq.~(\ref{equ:sprs-atom-semantics}). Suppose there are multiple routes starting from $l$ with the property that 
there exists an index $i$ such that all nodes with index $<i$ satisfy $\neg \varphi$ and all nodes with index $\geq i$ satisfy $\varphi$.
In this case, $\tau[i]$ is the first node at which $\varphi$ becomes satisfied over $\tau$ starting from $l$, and afterwards $\varphi$ remains satisfied at all locations of $\tau$, corresponding to recoverability and persistency, respectively. 
Thus, we use $d_{\tau}^f[i]$, the distance over $\tau[..i]$, to quantify recoverability and $d_{\tau}^f[i..]$, the distance over $\tau[i..]$, to quantify persistency. 
Value $x_r=d_1-d_{\tau}^f[i]$ quantifies how close the recoverability node occurs compared with the distance bound $d_1$. If $x_r>0$, the recoverability bound requirement is satisfied because a shorter (closer) recovery implies satisfaction. $x_p=d_{\tau}^f[i..] -d_2$ quantifies how long $\varphi$ is maintained compared with the distance bound $d_2$. If $x_p>0$, the persistency bound requirement is satisfied because longer persistency implies satisfaction. Furthermore, $\spresv(\lserv, \xi, S_{d_1,d_2}(\varphi), l)$ is a maximum resilience set that includes maximum $(x_r, x_p)$ pairs corresponding to different routes. If there does not exist a route $\tau$ starting from $l$ that reaches a location at which $\varphi$ is satisfied, then the semantics return $(-\infty, -\infty)$, representing the worst-case recoverability and persistency.

The SpaRS semantics for composite SpaRS formulas is derived by computing sets of maximum/minimum $(x_r, x_p)$ pairs. For $\land$ and $\boxbox$, we consider the minimum resilience set. For $\lor$ and $\diamonddiamond$, we consider the maximum resilience set. For $\reach$ and $\escape$, we consider nested maximum and minimum resilience sets. 

\begin{remark}
In Eq.~(\ref{equ:sprs-atom-semantics}),
we take an existential view of recoverability and persistency in terms of the branching of the underlying A-spatial model. That is, the same route used for recovery is extended for persistency. An alternative approach involves a universal perspective of persistency, taking into account all routes emanating from the recovery location.
Our future work will examine this issue in more detail.
\end{remark}

\begin{example}
    In Example ~\ref{example 1}, $\varphi$ refers to whether the solar power at a location is sufficient to fully charge the rover. We evaluate the semantics of the following SpaRS formulas at location $s_0$ of Figure~\ref{fig:example}.  
    \begin{itemize}[leftmargin=*]
        \item $\sr_{9, 12}(\varphi)$.
        This semantic gives two non-dominated pairs. One route $s_0 \rightarrow s_3 \rightarrow s_6 \rightarrow s_7\rightarrow s_5\rightarrow s_8$ induces the pair $(x_r, x_p) = (9-8, 27-12) = (1,15)$. Another route $s_0 \rightarrow s_4 \rightarrow s_5\rightarrow s_7 \rightarrow s_6$, induces the pair $(x_r, x_p) = (9-7, 17-12)= (2,5)$. Since $(1,15)$ and $(2, 5)$ are non-dominated, we have $ \spresv(\lserv, \xi, \sr_{14, 6} (\varphi), s_0) = \{(1,15),\allowbreak (2,5)\}$. 
        
        
        \item $\boxbox_{[0,5]} \sr_{9,12} (\varphi)$. Locations $s_0, s_1, s_3, s_4$ are within the distance $[0,5]$ of $s_0$. After calculation, We know $\spresv(\lserv, \xi, \sr_{9, 12} (\varphi), s_0) $ $= $ $\{(2,5), \allowbreak (1,15)\} $, 
        $
        \spresv(\lserv, \xi,$ $ \sr_{9, 12} (\varphi), s_1)$ $ = $ $\{(9,-12)\},$
        $        \spresv(\lserv, \xi,$ $ \sr_{9, 12} (\varphi), s_3)$ $ = $ $\{(3,15)\}$, 
        $        \spresv(\lserv, \xi, \sr_{9, 12} (\varphi), s_4) $ $= $ $\{(3,$$ 5),$ $ (1,15)\} 
        $. The everywhere operator selects the minimum pairs from them. Thus, $\spresv(\lserv, \xi, \allowbreak \boxbox_{[0,5]} \sr_{9,12} (\varphi), \allowbreak s_0) = \{(9,-12)\}$.     
    \end{itemize}
    
\end{example}

\begin{remark}
We define the S-atom SpaRV over routes without repeated edges for two main reasons.
First, allowing edges to repeat would make the persistency distance arbitrarily large or even unbounded, which is not meaningful for resilience analysis. 
For example, if two adjacent nodes $l_1$ and $l_2$ both satisfy $\varphi$, then the route $l_1 \rightarrow l_2 \rightarrow l_1 \rightarrow l_2 \rightarrow \cdots$ can be extended infinitely, yielding infinite persistency. 
Second, recoverability is determined by the shortest distance to a node satisfying $\varphi$, so any route that repeats edges is necessarily longer and therefore irrelevant for computing recovery distance. 
Following the convention in STREL, we still allow repeated nodes (provided that no edge is repeated), which preserves expressive flexibility without introducing routes with infinite length.
\end{remark}

\begin{remark}
We choose STREL as the foundation for SpaRS because it is the only existing spatial logic that provides route-based operators that align directly with the notions of recoverability and persistency. Other spatial logics differ fundamentally that make them unsuitable for defining resilience semantics.
(1) SSTL~\cite{SSTL} resembles STREL syntactically, but its spatial operators---somewhere and surround---are region-oriented rather than path-oriented. 
The somewhere operator can express the existence of a recovery point but cannot encode persistency along a spatial route where all locations are satisfied, while the surround operator describes topological enclosure, which is not meaningful for our route-based semantics.
(2) SpaTeL~\cite{SpaTel} relies on quad-tree spatial abstractions and directional spatial modalities. 
Such hierarchical spatial structures do not generalize to arbitrary spatial domains, and SpaTeL formulas are often difficult to specify manually, typically requiring automated synthesis or learning.
(3) SaSTL~\cite{SaSTL} extends STL with spatial aggregation and counting operators suited for expressing global or statistical spatial properties, but its semantics do not refer to spatial routes and therefore cannot characterize resilience behaviors that depend on evolution along a route.
\end{remark}

\begin{theorem}[Soundness and Completeness of SpaRS Semantics]
\label{theorem: soundness and completeness}
Let $\xi$ be a spatial signal and $\psi$ a SpaRS specification. The following results at location $l$ hold:

(1) $\exists x \in \spresv(\lserv, \psi, \xi, l) \; s.t. \; x\succre \mathbf{0} \Longrightarrow(\lserv, \xi, l) \models \psi $

(2) $\exists x \in \spresv(\lserv, \psi, \xi, l) \; s.t. \; x\precre \mathbf{0} \Longrightarrow(\lserv, \xi, l) \models \neg \psi $

(3) $(\lserv, \xi, l) \models \psi \Longrightarrow \exists x \in \spresv(\lserv, \psi, \xi, l) \; s.t. \; x\succre \mathbf{0} \; or \;  x\eqre \mathbf{0}$

(4) $(\lserv, \xi, l) \models \neg \psi \Longrightarrow \exists x \in \spresv(\lserv, \psi, \xi, l) \; s.t. \; x\precre \mathbf{0} \; or \;  x\eqre \mathbf{0}$
    
\end{theorem}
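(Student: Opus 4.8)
The plan is to prove all four statements simultaneously by structural induction on $\psi$, after first reducing them to a single scalar quantity. The key preliminary observation is that $\spresv(\lserv,\xi,\psi,l)$ is always a \emph{non-dominated} set: the base case is a $\maxre$-set, and every inductive clause applies $\maxre$, $\minre$, or pointwise negation, each of which returns a non-dominated set (for $\maxre(P)$, any two retained elements $x,x'$ satisfy $x\succre x'\lor x\eqre x'$ and $x'\succre x\lor x'\eqre x$, which by the asymmetry of $\succre$ forces $x\eqre x'$; and negation preserves $\eqre$). Since $x\eqre y$ forces $\mathrm{sign}(x_r)+\mathrm{sign}(x_p)=\mathrm{sign}(y_r)+\mathrm{sign}(y_p)$, all elements of $\spresv(\lserv,\xi,\psi,l)$ share a common sign-sum, which I denote $\varsigma(\psi,l)\in\{-2,-1,0,1,2\}$. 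A direct case analysis on Definition~\ref{def:binaryrelation} gives, for $\mathbf 0=(0,0)$: $\exists x\,(x\succre\mathbf 0)\Leftrightarrow\varsigma\ge 1$, $\exists x\,(x\precre\mathbf 0)\Leftrightarrow\varsigma\le -1$, and $\exists x\,(x\eqre\mathbf 0)\Leftrightarrow\varsigma=0$. Hence the four claims collapse to the two implications (A) $\varsigma(\psi,l)\ge 1\Rightarrow(\lserv,\xi,l)\models\psi$ and (B) $(\lserv,\xi,l)\models\psi\Rightarrow\varsigma(\psi,l)\ge 0$; using Boolean bivalence, statements (1),(4) are exactly (A) and its contrapositive, and (2),(3) are exactly (B) and its contrapositive. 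Finally, $\varsigma$ obeys a simple calculus: $\varsigma(\neg\psi)=-\varsigma(\psi)$, and because $\maxre/\minre$ act on sign-sums as $\max/\min$, we get $\varsigma(\psi_1\wedge\psi_2)=\min(\varsigma(\psi_1),\varsigma(\psi_2))$, $\varsigma(\psi_1\vee\psi_2)=\max(\cdot,\cdot)$, and the $\reach$/$\escape$ clauses reduce to the corresponding $\max$--$\min$ expressions over routes.

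\textbf{Base case (S-atom), the crux.} Writing $x_r=d_1-d_\tau^f[i]$ and $x_p=d_\tau^f[i..]-d_2$, I must show the $\maxre$ in Eq.~(\ref{equ:sprs-atom-semantics}) has $\varsigma\ge 0$ exactly when $(\lserv,\xi,l)\models\sr_{d_1,d_2}(\varphi)$, and $\varsigma\ge 1$ only when it does. Since $x_r\ge 0\Leftrightarrow d_\tau^f[i]\le d_1$ (Boolean recoverability), $x_p\ge 0\Leftrightarrow d_\tau^f[i..]\ge d_2$ (Boolean persistency along the all-$\varphi$ suffix), and a sign-sum $\ge 1$ forces both signs nonnegative, both (A) and (B) reduce to the equivalence: $(\lserv,\xi,l)\models\neg\varphi\reach_{[0,d_1]}(\varphi\reach_{[d_2,\infty)}\varphi)$ iff there is a \emph{single} route $\tau$ from $l$ and index $i$ with $d_\tau^f[i]\le d_1$, all $\tau[j]\,(j<i)$ satisfying $\neg\varphi$, all $\tau[k]\,(k\ge i)$ satisfying $\varphi$, and $d_\tau^f[i..]\ge d_2$. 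The forward direction is immediate (read the prefix as the recovery witness and the suffix as the inner-reach witness). The reverse direction is where the work lies: the Boolean semantics supplies a recovery route reaching some $m$ with $\neg\varphi$ strictly before $m$, and a \emph{separate} persistency route from $m$ of length $\ge d_2$ through $\varphi$-locations, which I would concatenate into one legal route. No edge can repeat, because every edge of the recovery prefix is incident to a $\neg\varphi$-location while every edge of the persistency route joins two $\varphi$-locations, so the two edge sets are disjoint and each piece is edge-simple on its own. When no route from $l$ reaches a $\varphi$-location, Eq.~(\ref{equ:sprs-atom-semantics}) returns $\{(-\infty,-\infty)\}$ with $\varsigma=-2$, matching $(\lserv,\xi,l)\models\neg\sr_{d_1,d_2}(\varphi)$.

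\textbf{Inductive step.} With the sign-sum calculus in place, each clause is routine. For $\neg\psi$, from $\varsigma(\neg\psi)=-\varsigma(\psi)$, implication (A) for $\neg\psi$ (i.e. $\varsigma(\psi)\le -1$) follows from the contrapositive of (B) for $\psi$, and (B) for $\neg\psi$ from the contrapositive of (A) for $\psi$. For $\psi_1\wedge\psi_2$, the identity $\varsigma=\min(\varsigma(\psi_1),\varsigma(\psi_2))$ mirrors $\models\psi_1\wedge\psi_2\Leftrightarrow\models\psi_1\text{ and }\models\psi_2$, yielding (A),(B) directly from the two inductive hypotheses; $\vee$ is dual via $\max$. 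For $\psi_1\reach_{[d_1,d_2]}^f\psi_2$, the semantics gives $\varsigma(\psi_1\reach\psi_2,l)=\max_{\tau}\max_{i:\,d_\tau^f[i]\in[d_1,d_2]}\min\big(\varsigma(\psi_2,\tau[i]),\min_{j<i}\varsigma(\psi_1,\tau[j])\big)$, which is $\ge 1$ iff some admissible $(\tau,i)$ has $\varsigma(\psi_2,\tau[i])\ge 1$ and $\varsigma(\psi_1,\tau[j])\ge 1$ for all $j<i$; by the inductive hypotheses this is precisely a Boolean witness for $\psi_1\reach\psi_2$, proving (A), and the same argument with $\ge 0$ proves (B). The $\escape$ clause is identical but with $d_\lserv^f[l,\tau[i]]$ in the index constraint, and the derived $\diamonddiamond$, $\boxbox$ follow as the $\top\reach$ and $\neg\diamonddiamond\neg$ instances.

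\textbf{Main obstacle.} I expect the only genuinely delicate point to be the reverse direction of the base-case equivalence---gluing the recovery and persistency routes into one edge-simple route---together with the bookkeeping that confirms the non-domination invariant and the $\maxre/\minre\leftrightarrow\max/\min$ correspondence survive the $\pm\infty$ entries and the empty-witness convention, as well as the boundary value $\varsigma=0$, where the theorem deliberately permits $x\eqre\mathbf 0$ on either side. All remaining inductive cases become mechanical once the reduction to $\varsigma$ and its $\max$/$\min$ calculus is in place.
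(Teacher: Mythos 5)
Your proposal is correct, and while it retraces the paper's overall structural-induction skeleton (same case split: S-atom, $\neg$, $\wedge$, $\vee$, $\reach$, $\escape$), it is organized around a genuinely different device that the paper does not use. You first establish the invariant that every $\spresv$-set is non-dominated and hence carries a single sign-sum $\varsigma\in\{-2,\dots,2\}$, then collapse the four statements into the two implications (A) and (B) about $\varsigma$, disposing of (2) and (4) by contraposition and bivalence rather than by the paper's unelaborated ``follow analogously.'' This buys two things the paper leaves implicit or missing. First, the paper's inferences of the form ``$\exists x\in\minre(\spresv(\lserv,\xi,\psi_1,l)\cup\spresv(\lserv,\xi,\psi_2,l))$ with $x\succre\mathbf{0}$ implies \emph{all} elements of the union are $\succre\mathbf{0}$'' are valid only because of exactly your common-sign-sum invariant, which the paper never states or proves; your explicit $\maxre/\minre\leftrightarrow\max/\min$ transfer makes those steps rigorous. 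Second, in the completeness base case the paper simply asserts that the recovery witness $\tau[..i]$ and the inner-reach witness $\tau'[..i']$ concatenate into ``a route,'' ignoring that routes must be edge-simple; your observation that every prefix edge is incident to a $\neg\varphi$-location while every suffix edge joins two $\varphi$-locations, so the two edge sets are disjoint, repairs a genuine gap in the paper's own argument. One caveat you flag but should actually discharge: when the distance constraint of $\reach$ or $\escape$ admits no pair $(\tau,i)$ at all, the composite clauses yield an \emph{empty} set, so $\varsigma$ is undefined and statement (4) would fail for lack of any witness; you must either prove nonemptiness or extend the atom's $\{(-\infty,-\infty)\}$ convention uniformly to the composite operators --- the paper is silent on the same point, so this is a shared, not a new, deficiency.
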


\begin{proof}
We denote $x=(x_r, x_p)$ and prove the theorem inductively.
We prove statements (1) and (3), while statements (2) and (4) follow analogously. 
Due to space limitations, we do not show proof for the somewhere and everywhere operators, which can be derived from the reach operator.

\vspace{1ex}
\noindent (1)~$\exists x \in \spresv(\lserv, \psi, \xi, l) \; s.t. \; x\succre \mathbf{0} \Longrightarrow(\lserv, \xi, l) \models \psi$

\noindent\textbf{Case} $\psi=\sr_{d_1, d_2} (\varphi)$.
$x \succre \mathbf{0}$ implies $x_r, x_p \geq 0 \wedge x \not= \mathbf{0}$. $x_r \geq 0 $ implies 
there exists a route $\tau\in Routes(\lserv, l)$ and a recovery point $i$ such that $ d_{\tau}^f[i]\in[0,d_1]$ and $(\lserv, \xi, \tau[j]) \models \neg \varphi$ hold for all location indices $j<i$ and $(\lserv, \xi, \tau[k]) \models \varphi$ for all location indices $k\geq i$. Moreover, $x_p\geq 0$ implies that there exists a position $i'$ such that $d_{\tau'}^f[i']\in [d_2,\infty)$ and $(\lserv, \xi, \tau'[j']) \models \varphi$ hold for all positions $j'\leq i'$, where $\tau'=\tau[i..]$ is the suffix of $\tau$ starting at the recovery location. By Definition ~\ref{def: srel semantics}, $(\lserv, \xi, \tau[i]) \models \varphi \reach_{[d_2, \infty)} \varphi$, and $(\lserv, \xi, l) \models \neg\varphi \reach_{[0, d_1]} (\varphi \reach_{[d_2, \infty)} \varphi)$, where the route $\tau$ and index $i$ are witnesses to the satisfaction of the outer $\reach_{[0, d_1]}$ operator, and $\tau'$ and $i'$ are witnesses for the inner $\reach_{[d_2, \infty)}$ operator.


\noindent\textbf{Case} $\psi=\neg \psi_1$.
By Definition ~\ref{sprs quantitative semantics}, $\spresv(\lserv, \xi, \psi, l)$  $=$ $ \{(-x_r,-x_p): (x_r,x_p) \allowbreak \in \allowbreak \spresv(\lserv, \xi, \psi_1, l) \}$. Thus $\exists x \in \spresv(\lserv, \xi, \psi_1, l)$ s.t. $x \precre 0$. From the induction hypothesis, we have $(\lserv, \xi, l) \models \neg \psi_1$.  

\noindent\textbf{Case} $\psi = \psi_1 \wedge \psi_2$.
By Definition~\ref{sprs quantitative semantics}, $\spresv( \lserv, \xi, \psi, l)$  $=$  $\minre (\spresv( \lserv, \xi, \psi_1, l) \allowbreak \cup  \spresv(\lserv, \xi, \psi_2, l))$. It implies $\forall x' \in \spresv(\lserv, \xi, \psi_1, l) \cup \spresv(\lserv, \xi, \psi_2, l), x' \succre \mathbf{0}$. From the induction hypothesis, we have $(\lserv, \xi, l) \models \psi_1$ and $(\lserv, \xi, l) \models \psi_2$. Thus $(\lserv, \xi, l) \models \psi_1 \wedge \psi_2$.


\noindent\textbf{Case} $\psi = \psi_1 \vee \psi_2$.
By Definition~\ref{sprs quantitative semantics}, $\spresv( \lserv, \xi, \psi, l)$  $=$  $\maxre (\spresv( \lserv, \xi, \psi_1, l) \allowbreak \cup  \spresv(\lserv, \xi, \psi_2, l))$. It implies $\exists x' \in \spresv(\lserv, \xi, \psi_1, l) \cup \spresv(\lserv, \xi, \psi_2, l), x' \succre \mathbf{0}$. Therefore $x'\in \spresv(\lserv, \xi, \psi_1)$ or $x'\in \spresv(\lserv, \xi, \psi_2)$. From the induction hypothesis, we have $(\lserv, \xi, l) \models \psi_1$ or $(\lserv, \xi, l) \models \psi_2$. Thus $(\lserv, \xi, l) \models \psi_1 \vee \psi_2$.

\noindent\textbf{Case} $\psi = \psi_{1} \: \reach_{[d_{1},d_{2}]}^f \: \psi_{2}$.
By Definition ~\ref{sprs quantitative semantics},  $\spresv(\lserv, \xi, \psi_{1} \: \reach_{[d_{1},d_{2}]}^f \: \psi_{2}, l)= \maxre\limits_{\tau\in Routes(\lserv,l)}
        \maxre\limits_{i : \left(d_{\tau}^{f}[i] \in [d_{1},d_{2}]\right)}
    $ $\minre(
            (\spresv(  \lserv, \xi, \psi_2,  \tau[i]),            
            \minre\limits_{j<i}(
                \spresv(  \lserv, \xi, $ \, $ \psi_1, \allowbreak  \tau[j]) )            
        )$. 
This implies $\exists \tau\in Routes(\lserv, l), i\in \mathbb{N}
$, s.t. $d_{\tau}^f[i]\in[d_1,d_2]$ and $\forall x' \in \spresv(\lserv, \xi, \psi_2, \tau[i]) \, \cup \,\minre\limits_{j<i}(\spresv(\lserv, \xi, \psi_1, \tau[j])), x' \succre \mathbf{0}$. 
From the induction hypothesis, we have $(\lserv, \xi, \tau[i]) \models \psi_2$. Similarly, we have $x''\succre \mathbf{0}$ for all $x'' \in \spresv(\lserv, \xi, \psi_1, \tau[j])$ and all $j<i$.  From the induction hypothesis, $\forall j<i, (\lserv, \xi, \tau[j]) \models \psi_1$. Together with $(\lserv, \xi, \tau[i]) \models \psi_2$, we conclude $(\lserv, \xi, l )\models \psi_1 \reach_{[d_1, d_2]}^f \psi_2$.

\noindent\textbf{Case} $\psi = \escape_{[d_1,d_2]}^f \psi_1$.
By Definition~\ref{sprs quantitative semantics}, $    \spresv( \lserv, \xi,\escape_{[d_{1},d_{2}]}^{f}  \psi_1, l) =  \allowbreak$ \, $ \allowbreak
\maxre\limits_{\tau\in Routes(\lserv,l)} 
        \maxre\limits_{i : \left(d_{\lserv}^{f}[l,\tau[i]] \in [d_{1},d_{2}]\right)}
            ~~\minre\limits_{j \leq i} 
                (\spresv(  \lserv, \xi, \psi_1,  \tau[i]))$.
It implies that $\exists \tau \allowbreak\in Routes(\lserv, l), i\in \mathbb{N}$, such that $d_{\lserv}^f[l, \tau[i]]\in[d_1,d_2] $ and $\forall x' \in \minre\limits_{j\leq i}(\spresv(\lserv, \xi,\psi_1,\tau[j])),  x'\succre \mathbf{0}$. 
Thus, we have $x''\succre \mathbf{0}$ for all $x''\in \spresv(\lserv, \xi, \psi_1, \tau[j])$ and all $j\leq i$. From the induction hypothesis, $(\lserv, \xi, \tau[j]) \models \psi_1$ for all $j\leq i$. Thus, we have $(\lserv, \xi, l)\models \escape_{[d_1, d_2]}^f \psi_1$.


\vspace{3ex}
\noindent (3)~$(\lserv, \xi, l) \models \psi \Longrightarrow \exists x \in \spresv(\lserv, \psi, \xi, l) \; s.t. \; x\succre \mathbf{0} \; or \;  x\eqre \mathbf{0}$
        
\noindent\textbf{Case} $\psi = \sr_{d_1,d_2}(\varphi)$.
$(\lserv, \xi, l) \models \neg \varphi \reach_{[0,d_1]} (\varphi \reach_{[d_2, \infty)} \varphi)$ implies that there exists a route $\tau\in Routes(\lserv, l)$ and a recovery point $i$ such that $d_\tau^f[i]\in [0,d_1]$ and $(\lserv, \xi, \tau[j])\models\neg\psi$ hold for all location indices $j<i$ and $(\lserv, \xi, \tau[i])\models \psi\reach_{[d_2, \infty)}\psi$. We also know $(\lserv, \xi, \tau[i]) \models \psi\reach_{[d_2, \infty)}\psi$ implies there exists a route $\tau'$ starting from $\tau[i]$ and a location index $i'$ such that $d_{\tau'}^f[i']\in[d_2,\infty)$ and $(\lserv, \xi, \tau[j']) \models \varphi$ hold for all location indices $j'<i'$. To conclude, there exists a route whose prefix is $\tau$ until the location index $i$, and suffix is $\tau'$ until the location index $i'$. The length of the prefix belongs to $[0,d_1]$ and the length of the suffix belongs to $[d_2, \infty)$. Thus, $x_r = d_1-d_{\tau'}^f[i] \geq 0$, $x_p=d^f(\tau'')-d_2\geq 0$. Thus $x\succre \mathbf{0}$ or $x\eqre \mathbf{0}$.



\noindent\textbf{Case} $\psi=\neg \psi_1$.
From the induction hypothesis, we have $(\lserv, \xi, l) \models \neg \psi_1$ implies $\exists x\in \spresv(\lserv, \xi, \psi_1, l)$ s.t. $x\precre \mathbf{0}$ or $x\eqre \mathbf{0}$. By definition ~\ref{sprs quantitative semantics}, we have $(-x_r, -x_p) \in \spresv(\lserv, \xi, \neg\psi, l)$. Thus $(-x_r, -x_p) \succre \mathbf{0}$ or $(-x_r, -x_p) \eqre \mathbf{0}$.

\noindent\textbf{Case} $\psi = \psi_1 \wedge \psi_2$.
$(\lserv, \xi, l) \models  \psi_1 \wedge \psi_2$ implies $(\lserv, \xi, l) \models  \psi_1$ and $(\lserv, \xi, l) \models  \psi_2$. 
From the induction hypothesis, $(\lserv, \xi, l) \models  \psi_1$ implies $\exists x' \in \spresv(\lserv, \xi, \psi_1,l)$ s.t. $x'\succre\mathbf{0}$ or $x'\eqre\mathbf{0}$. Thus $\forall x' \in \spresv(\lserv, \xi, \psi_1,l)$ s.t. $x'\succre\mathbf{0}$ or $x'\eqre\mathbf{0}$.
Similarly, $(\lserv, \xi, l) \models  \psi_2$ implies $\forall x'' \in \spresv(\lserv, \xi, \psi_2,l)$ s.t. $x''\succre\mathbf{0}$ or $x''\eqre\mathbf{0}$.
Thus $\forall x \in \minre (\spresv( \lserv, \xi, \psi_1, l) \cup  \spresv(\lserv, \xi, \psi_2, l)), x\succre \mathbf{0}$ or $x\eqre \mathbf{0}$.

\noindent\textbf{Case} $\psi = \psi_1 \vee \psi_2$.
$(\lserv, \xi, l) \models  \psi_1 \vee \psi_2$ implies $(\lserv, \xi, l) \models  \psi_1$ or $(\lserv, \xi, l) \models  \psi_2$. 
From the induction hypothesis, $(\lserv, \xi, l) \models  \psi_1$ implies $\exists x' \in \spresv(\lserv, \xi, \psi_1,l)$ s.t. $x'\succre\mathbf{0}$ or $x'\eqre\mathbf{0}$. Thus $\forall x' \in \spresv(\lserv, \xi, \psi_1,l)$ s.t. $x'\succre\mathbf{0}$ or $x'\eqre\mathbf{0}$. Similarly, $(\lserv, \xi, l) \models  \psi_2$ implies $\forall x'' \in \spresv(\lserv, \xi, \psi_2,l)$ s.t. $x''\succre\mathbf{0}$ or $x''\eqre\mathbf{0}$. Thus $\forall x \in \maxre (\spresv( \lserv, \xi, \psi_1, l) \cup  \spresv(\lserv, \xi, \psi_2, l)), x\succre \mathbf{0}$ or $x\eqre \mathbf{0}$.

\noindent\textbf{Case} $\psi = \psi_1 \reach^f_{[d_1, d_2]} \psi_2$.
We have $(\lserv, \xi, l) \models \psi_1 \reach^f_{[d_1,d_2]} \psi_2$ implies that $\exists \tau\in Routes(\allowbreak\lserv, $ \ $ l), i \in \mathbb{N}$ s.t. $d_{\tau}^f[i]\in[d_1, d_2]$, $(\lserv, \xi, \tau[i]) \models \psi_2 $ and $ \forall j<i, (\lserv, \xi, \tau[j]) \models \psi_1$. 
From the induction hypothesis, $\exists x\in \spresv(\lserv, \xi, \psi_2, \tau[i]),$ s.t. $x\succre\mathbf{0}$ or $x\eqre\mathbf{0}$ and $\forall j<i, \exists x\in\spresv(\lserv, \xi, \psi_1, \tau[j])$ s.t. $x\succre\mathbf{0}$ or $x\eqre\mathbf{0}$. Thus, $\forall x\in \spresv(\lserv, \xi, \psi_2, \tau[i])$,  $x\succre\mathbf{0}$ or $x\eqre\mathbf{0}$, and $\forall j<i, \forall x\in\spresv(\lserv, \xi, \psi_1, \tau[j]), x\succre\mathbf{0}$ or $x\eqre\mathbf{0}$. Thus $\exists x\in \minre(
            \spresv(  \lserv, \xi, \psi_2,  \tau[i]), $ \ $
            \minre\limits_{j<i}(
                \spresv(  \lserv, \xi, \psi_1,  \tau[j])) )
                $ s.t. $x\succre\mathbf{0}$ or $x\eqre\mathbf{0}$. 

\noindent\textbf{Case} $\psi = \escape_{[d_1,d_2]}^f \psi_1$.
$(\lserv,\xi,l)\models \escape_{[d_1,d_2]}^f \psi_1$ implies $\exists \tau\in Routes(\lserv, l), \allowbreak i\in \mathbb{N}$ s.t. $d_{\lserv}^f[l,\tau[i]] \in [d_1,d_2] $ and $ \forall j\leq i, (\lserv,\xi,\tau[j])\models\psi_1$. From the induction hypothesis, $\forall j\leq i, \exists x\in\spresv(\lserv,\xi,\psi_1,\tau[j])$ s.t. $x\succre\mathbf{0}$ or $x\eqre\mathbf{0}$. Thus $\exists x\in\minre\limits_{j\leq i}\spresv(\lserv, \xi, \psi_1, \tau[j])$ s.t. $x\succre\mathbf{0}$ or $x\eqre\mathbf{0}$. 
\end{proof}

\section{Evaluation Algorithm}\label{sec:evaluation}

\begin{algorithm}[t]%
  \caption{Evaluate($\lserv$, $\xi$, $\psi$)}
  \label{algo:top_evaluate}
\begin{algorithmic}[1]

\CASE{$\psi=S_{d_1, d_2} (\varphi)$}
\STATE $\pcsts=\text{EvalS$(\lserv,f, d_1,d_2, \varphi)$}$
\RETURN{ $\pcsts$ }
\ENDCASE

\CASE{$\psi=\neg\psi_1$}
\STATE $\pcsts_1=\text{Evaluate}(\lserv,\xi,\psi_1)$
\STATE $\pcsts=[]$
\FORALL{ $\ell \in L$ } 
\STATE $\pcsts(\ell)=\{(-x_r,-x_p) \,|\, (x_r,x_p)\in\pcsts_1(\ell)\}$
\ENDFOR 
\RETURN{ $\pcsts$ }
\ENDCASE

\CASE{$\psi=\psi_1 \wedge \psi_2$}
\STATE $\pcsts_1=\text{Evaluate}{(\lserv,\xi,\psi_1)}$
\STATE $\pcsts_2=\text{Evaluate}{(\lserv,\xi,\psi_2)}$
\STATE $\pcsts=[]$
\FORALL{ $\ell \in L$} 
\STATE $\pcsts(\ell)=\minre(\pcsts_1(\ell)\cup\pcsts_2(\ell))$
\ENDFOR
\RETURN{ $\pcsts$ }
\ENDCASE

\CASE{$\psi=\psi_1 \reach_{[d_1,d_2]}^{f} \psi_2$}
\STATE $\pcsts_1=\text{Evaluate}(\lserv,\xi,\psi_1)$
\STATE $\pcsts_2=\text{Evaluate}(\lserv,\xi,\psi_2)$
\STATE $\pcsts=\text{EvalReach}(\lserv,f,d_1,d_2,\pcsts_1,\pcsts_2)$
\RETURN{ $\pcsts$ }
\ENDCASE

\CASE{$\psi=\escape_{[d_1, d_2]}^{f} \psi_1$}
\STATE $\pcsts_1=\text{Evaluate}(\lserv,\xi,\psi_1)$
\STATE $\pcsts=\text{EvalEscape}(\lserv, f, d_1, d_2,\pcsts_1)$
\RETURN{ $\pcsts$ }
\ENDCASE

\CASE{$\psi=\diamonddiamond_{[d_1, d_2]}^{f} \psi_1$}
\STATE $\pcsts_1=\text{Evaluate}(\lserv,\xi,\psi_1)$
\STATE $\pcsts=\text{EvalSomewhere}(\lserv, f, d_1, d_2,\pcsts_1)$
\RETURN{ $\pcsts$ }
\ENDCASE

\CASE{$\psi=\boxbox_{[d_1, d_2]}^{f} \psi_1$}
\STATE $\pcsts_1=\text{Evaluate}(\lserv,\xi,\psi_1)$
\STATE $\pcsts=\text{EvalEverywhere}(\lserv, f, d_1, d_2,\pcsts_1)$
\RETURN{ $\pcsts$ }
\ENDCASE

\end{algorithmic}
\end{algorithm}

We present algorithms for evaluating SpaRS formulas over an A-spatial model $\lserv=(L,W)$. 
The algorithm computes the SpaRV of a SpaRS formula, i.e., a set of $(x_r, x_p)$ pairs, for each starting location \(s\in L\). 
Algorithm~\ref{algo:top_evaluate} is designed by induction on the syntax of the SpaRS formula.

\begin{algorithm}[t]
  \caption{EvalS$((L,W), \, f:A\!\to\!\mathbb{R}_{\geq0}, \, d_1\geq0, \, d_2\geq0, \, \varphi:\text{SREL formula})$ }
  \label{alg:EvalS}
  \begin{algorithmic}[1]
    \STATE Compute $L_{\Phi},L_{\neg\Phi}$ with SREL Monitor.
    \STATE Initialize $F[s]=\emptyset$ for all $s\in L$.
    \FOR{each $v\in L_{\Phi}$}
        
        \STATE \textbf{// Stage 1: recoverability from all $s$ to $v$}
        \STATE $U = L_{\neg\Phi}\cup\{v\}$ 
        \STATE $W[U] = \{(u,w,nbr)\in W \mid u\in U,\ nbr\in U\}$.
        \STATE Initialize $\mathrm{rec}[u]= \infty$ for all $u\in L_{\neg\Phi}$ ; $\mathrm{rec}[v]= 0$ 
        \STATE Priority queue $Q=\{(0,v)\}$.
        \WHILE{$Q\not=\emptyset$ }
          \STATE Pop $(d,u)$ with smallest $d$; \textbf{continue} if $d\neq \mathrm{rec}[u]$.
          \FOR{each $(u,w,nbr)\in W[U]$ with $u$ fixed} 
            \IF{$d+w < \mathrm{rec}[nbr]$}
              \STATE $\mathrm{rec}[nbr]= d+w$; \quad push $(\mathrm{rec}[nbr],\ nbr)$ into $Q$
            \ENDIF
          \ENDFOR
        \ENDWHILE

      \STATE \textbf{// Stage 2: persistency starting from $v$ in $\Sigma_\varphi$}
      \STATE $d_v = \displaystyle\sum_{(u,w,x)\in W[L_v]} f(w)$
        \STATE \textbf{if} $\big(\forall u\in L_v:\ \deg(u)= 0 \pmod 2\big)$ \textbf{ then return } $d_v$ \quad 
        \STATE \textbf{else if} $\big(|\{u\in L_v\ | \ \deg(u)=1 \pmod 2\}|=2 \ \wedge\ \deg(v)=1\big)$ \textbf{ then return } $d_v$ \quad 
        \STATE $per= 0$ 
        \STATE LIFO stack $S=(v,\ \mathbf{0}_m,\  0)$ 
        \WHILE{$S\not=\emptyset$}
          \STATE $(u,used,acc)$ = $S$.pop(); 
          \STATE $per = \max(per,acc)$.
          \FOR{each $(x,\,k,\,f)\in \mathsf{adj}_v(u)$ with $used_k=0$}
            \STATE push $\big(x,\ used \lor \mathbf{e}_k,\ acc + f\big)$ \quad 
          \ENDFOR
        \ENDWHILE
    \STATE \textbf{// Stage 3: combiner}    
    \STATE $F[v] = (d_1, per-d_2)$
        
    \FOR{each $s\in L_{\neg\Phi}$}
        \STATE $F[s]=\maxre(F[s], (d_1-\mathrm{rec}[s],per-d_2))$
      \ENDFOR
    \ENDFOR
    \STATE \textbf{return} $F$
  \end{algorithmic}
\end{algorithm}


\subsection{Evaluation of $S_{d_1,d_2}(\varphi)$}\label{sec:evaluate-S}

We design a two-stage Algorithm~\ref{alg:EvalS} for computing the SpaRV of
$S_{d_1,d_2}(\varphi)$. The SpaRV consists of all maximal resilience pairs $(x_r, x_p)$, where each pair corresponds to a recovery route followed by a persistency route. To ensure maximality, the recovery component should be as short as possible, while the persistency component should be as long as possible.

Our procedure considers every node $v$ satisfying $\varphi$ as a candidate recovery point and computes its associated $(x_r,x_p)$ pair. 
Stage~1 computes the shortest recoverability distance from each starting location to $v$ using Dijkstra’s algorithm. Stage~2 computes the longest persistency distance reachable from $v$ by 
performing a depth-first search over the graph.
By evaluating every $\varphi$-satisfying node as a recovery node and collecting the resulting $(x_r, x_p)$ pairs, the algorithm derives the  maximal resilience set for $S_{d_1,d_2}(\varphi)$.

\begin{algorithm}[t]
  \caption{\text{EvalReach}$((L,W),   f{:}\, A\!\to\!\mathbb{R}_{\geq0},  d_1\geq0,  d_2\geq0 , \, s_1{:}\,L\!\to\!\{(x_r,x_p)\}, \; s_2{:}\,L\!\to\!\{(x_r,x_p)\})$}
  \label{alg:reach}
  \begin{algorithmic}[1]

    \STATE \textbf{Init:} For all $\ell\in L$,
       \[
         \ssign[\ell] = \begin{cases}
           s_2[\ell], & d_3=0\\
           \emptyset, & \text{otherwise}
         \end{cases}
         \qquad
         Q = \{\,(\ell,\, s_2[\ell],\, 0)\ :\ \ell\in L\,\}.
       \]

    \WHILE{$Q \neq \emptyset$} 
      \STATE $Q' = \emptyset$ 
      \FORALL{$(\ell,\,v,\,d)\in Q$}
        \FORALL{ $\ell':\ell' \stackrel{w}{\longrightarrow} \ell$}
          \STATE $\ v' = \minre\!\big( v \cup s_1[\ell'] \big)$
          \STATE $d' = d + f(w)$
          \IF{$d_1 \le d' \le d_2$}
            \STATE $\ssign[\ell'] = \maxre\!\big( \ssign[\ell'] \cup v' \big)$ 
          \ENDIF
          \IF{$d' < d_2$}

          \IF{$\exists (\ell',v'',d')\in Q'$} 
        \STATE $Q' = (Q'-\{ (\ell',v'',d') \})\cup \{ (\ell',\maxre(v'\cup v''),d') \}$
        \STATE \textbf{else} {$Q' = Q'\cup \{ (\ell',v',d') \}$}
        \ENDIF
\ENDIF

        \ENDFOR
      \ENDFOR
      \STATE $Q = Q'$
    \ENDWHILE
    \STATE \textbf{return } $\ssign$
  \end{algorithmic}
\end{algorithm}

To make the expression easier, we define symbols for specific subgraphs associated with a SpaRS formula $\varphi$. We define
\begin{align*}
  L_{\Phi} \;=\; \{l\in L \mid \fmon( \lserv, \xi, \varphi, l)=\top\},\qquad
  L_{\neg\Phi} \;\triangleq\; L\setminus L_{\Phi}.
\end{align*}

For $S\subseteq L$, the induced edge set is
\[
  W[S] \;=\; \{\, (u,w,v)\in W \mid u\in S \ \land\ v\in S \,\}.
\]

We denote $\lserv_\Phi=(L_\Phi, W[L_\Phi])$.



\noindent \textbf{Stage 1.} Given a node $v\in L_\Phi$, the goal is to find the shortest route starting from node $s$ and ending at node $v$, along which all nodes violate $\varphi$ except node $v$, for all $s\in L_{\neg\Phi} $.
This is computed by a single Dijkstra run on the induced subgraph $W[L_{\neg\Phi}\cup\{v\}]$ with source $v$.

\noindent \textbf{Stage 2.} Let $L_v$ and $W[L_v]$ be the sets of nodes and edges of the connected component of $\lserv_{\Phi}$ that contains $v$, and let $\deg(u)$ be the degree of the nodes $u\in L_v$ in the A-spatial model $\lserv_v=(L_v, W[L_v])$.
The goal is to find the distance of the longest route starting from $v$ in $\lserv_v$.
Let 
\(
d_v= \sum_{(u,w,x)\in W[L_v]} f(w).
\)
Assign a bijection (edge indexing)
\(
\kappa: W[L_v]\to \{0,1,\dots,m-1\}.
\)
We represent edge usage by a bit-vector $used\in\{0,1\}^m$, where $used_k=1$ iff the undirected edge with index $k$ has been taken (in either direction).
For the adjacency of each node $u\in L_v$, define
\(
\mathsf{adj}_v(u)\ =\ \big\{\,\big(x,\ \kappa(\{u,x\}),\ f(w)\big)\ |\ \{u,w,x\}\allowbreak \in W[L_v]\,\big\}.
\)
where $\kappa$ indexes undirected edges; $used_k=1$ forbids reusing edge $k$ in either direction. $\mathbf{e}_k$ is the length-$m$ unit vector with a $1$ in position $k$. 
We first check whether $\lserv_v$ admits an Eulerian circuit or an Eulerian trail starting at $v$. If either condition holds, then the longest route starting from $v$ traverses every edge exactly once, and its length is $d_v$. Otherwise, we perform a depth-first search over $\lserv_v$, respecting the edge-usage mask $used$, to enumerate all edge-simple routes from $v$, and return the maximum accumulated distance encountered.




\noindent\textbf{Combiner.} For each recovery node $v\in L_{\Phi}$, Stage~1 returns recoverability for all starting locations $s\in L_{\neg\Phi} \cup \{v\}$; Stage~2 returns the length of the persistency route starting from $v$.
For any starting location $s\in L_{\neg\Phi}$, we return the resilience maximal pairs over all recovery nodes $v\in L_{\Phi}$. For any starting location $s\in L_\Phi$, since recovery occurs on $s$ itself, we return $(x_r,x_p)=(d_1, per-d_2)$.

\paragraph{Correctness.}
Every satisfying route from $s$ uniquely factors as a violation-only prefix ending at the first true node $v$ and a satisfaction-only suffix starting at $v$.
Stage~1 computes the minimal length of the route ending at $v$ under the ``first satisfaction node is $v$'' constraint; Stage~2 computes the maximal length of the route starting from $v$.
Therefore, the pair $(d_1-\mathrm{rec},\ \mathrm{per}-d_2)$ exactly matches the S-atom margins, and $\maxre$ returns the semantics at $s$.

\begin{algorithm}[tbp] 
\caption{EvalEscape($(L,\wfun)$, $f: A\rightarrow \mathbb{R}_{\geq0}$, $d_1\geq0$,$d_2\geq 0$, $s_1: L\rightarrow {(x_r, x_p)}$ )}
\label{alg:escape}
\vspace{1mm}
\begin{algorithmic}[1]
\STATE $D = \text{MinDistance}{(L,\wfun,f)}$
\STATE $\forall \ell,\ell'\in L. e[\ell,\ell'] = {(-\infty, -\infty)}$
\STATE $\forall \ell\in L. e[\ell,\ell]=s_1(\ell)$
\STATE $T=\{ (\ell,\ell) | \ell\in L \}$
\WHILE{ $T\not= \emptyset$ }
\STATE $e'=e$ 
\STATE $T'=\emptyset$
\FORALL{ $(\ell_1,\ell_2) \in T$ }
\FORALL{ $\ell_1': \nextto{\ell_1'}{w}{\ell_1}$ }
\STATE{ $v = \maxre(e[\ell_1',\ell_2], \minre(s_1(\ell_1'),  e[\ell_1,\ell_2]))$}
\IF{$v\not=e[\ell_1',\ell_2]$}
\STATE{$T'=T'\cup \{ (\ell_1',\ell_2) \}$}
\STATE{$e'[\ell_1',\ell_2]=v$}
\ENDIF
\ENDFOR
\ENDFOR
\STATE{T=T'}
\STATE{e=e'}
\ENDWHILE
\STATE $\ssign=[]$
\FORALL{ $\ell\in L$ }
\STATE $\ssign(\ell)=\maxre(\{ e[\ell,\ell'] | D[\ell,\ell']\in [d_1,d_2] \})$
\ENDFOR
\RETURN $\ssign$
\end{algorithmic}
\end{algorithm} 

\paragraph{Complexity.}
Stage~1 runs Dijkstra once per $v\in L_{\Phi}$ on $W[L_{\neg\Phi}\cup\{v\}]$, thus a complexity of $O(|E|\log|V|)$.
Stage~2 is linear when an Eulerian shortcut applies; 
otherwise, it is exponential in $|E[L_v]|$ (exhaustive DFS).

\subsection{Spatial operators} For the reach and escape operators, we adopt the flooding algorithm of~\cite{2022STREL}. In particular, we use Algorithm~\ref{alg:reach} to evaluate the reach operator and Algorithm~\ref{alg:escape} to evaluate the escape operator. Since our semantics is defined as a set of non-dominated $(x_r, x_p)$ pairs, we change max and min to $\maxre$ and $\minre$, respectively. 

Algorithm~\ref{alg:reach} evaluates the reach operator using the flooding
procedure of~\cite{2022STREL}. For each node $\ell$, the algorithm keeps a
set $s(\ell)$ of non-dominated resilience pairs. A queue $Q$
initializes all nodes with starting semantics $s_2(\ell)$.
During the flood, each entry $(\ell, v, d)$ is propagated
backwards along incoming edges $\ell' \xrightarrow{w} \ell$.
The updated distance $d' = d + f(w)$ is discarded if $d' > d_4$.
Otherwise, the new semantics $v' = \minre(v \cup s_1(\ell'))$ is valid.
If $d_1 \le d' \le d_2$, the result at $\ell'$ is updated to $\maxre(s(\ell') \cup v')$.
If $d' < d_2$, the tuple $(\ell', v', d')$ is reinserted into the queue,
continuing the flooding until convergence.
The final result $s$ returns semantics at each location.

Algorithm~\ref{alg:escape} evaluates the escape operator, again using the flooding
strategy of~\cite{2022STREL}. Unlike reach, the escape distance constraint
is applied to the shortest-path distance
$d_{\Sigma}(\ell, \ell') \in (d_3,d_4]$, not the accumulated flood
distance. Propagation proceeds only through nodes satisfying the
escape guard. For each predecessor $\ell'$, the updated semantics
$v'=\minre(v \cup s_1(\ell'))$ is added to the result via
$\maxre$ whenever the shortest-path constraint holds.
Flooding continues while $d' < d_4$.
The algorithm thus collects all non-dominated resilience values
associated with valid distance constraint.

For the somewhere and everywhere operators, we can also apply the flooding algorithm to evaluate. It is easy to adapt the Algorithm~\ref{alg:reach} for these two operators. Due to space limit, we only describe in text without pseudo code. For the somewhere, the input is $(L,W),f,d_1,d_2,s_1$. In line 1, we initialize $\ssign$ and $Q$ with $s_1$ instead of $s_2$. In line 6, we change $\minre$ to $\maxre$ since somewhere operator considers the maximum resilience pairs. In this way, we can evaluate somewhere with modified Algorithm~\ref{alg:reach}. For the everywhere operators, we didn't change line 6. Instead, in line 9 and line 13, we change $\maxre$ to $\minre$ since everywhere operator considers the minimum resilience pairs.

\begin{figure*}[t]
    \centering
    \begin{subfigure}[t]{0.24\linewidth}
        \centering
        \includegraphics[width=\linewidth]{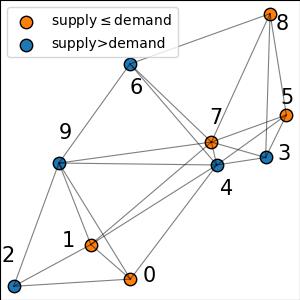}
        \caption{Networked Microgrid}
        \label{fig:Locations and Spatial Signal-NMG}
    \end{subfigure}%
    \hfill
    \begin{subfigure}[t]{0.24\linewidth}
        \centering
        \includegraphics[width=\linewidth]{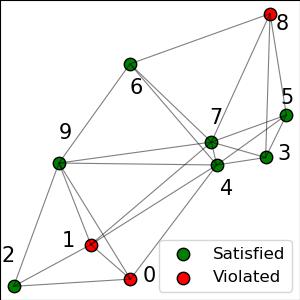}
        \caption{$\psi_1 = \sr_{1000,2000} (\varphi)$}
    \end{subfigure}
    \hfill
    \begin{subfigure}[t]{0.24\linewidth}
        \centering
        \includegraphics[width=\linewidth]{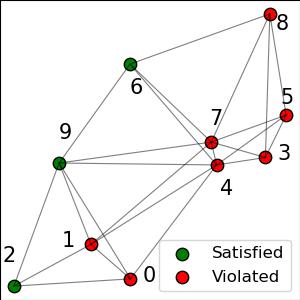}
        \caption{$\psi_2=\sr_{1000,1000} (\boxbox_{[0,1000]} \varphi)$}
    \end{subfigure}
    \hfill
    \begin{subfigure}[t]{0.24\linewidth}
        \centering
        \includegraphics[width=\linewidth]{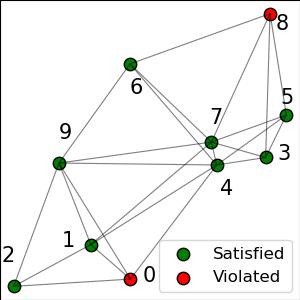}
        \caption{$\psi_3 = \diamonddiamond_{[0,1500]} \sr_{1000,2000}(\varphi)$}
    \end{subfigure}
    \caption{In the NMG model, each node represents an individual microgrid, uniquely identified by a corresponding index.}
    \label{fig:mc-example}
\end{figure*}

\section{Case studies}
We present two case studies: networked microgrids and a bike sharing system, to illustrate the usage of SpaRS specifications. 
We implemented our evaluation algorithm (Algorithms~\ref{algo:top_evaluate}-\ref{alg:escape}) in Python on a computer with an Intel Core i7-14700K CPU, 32 GB DDR5 memory, and Windows 11 operating system.

\subsection{Networked Microgrids}
\begin{figure*}[h]
    \centering
    \begin{subfigure}[t]{0.35\linewidth}
        \centering
        \includegraphics[width=\linewidth]{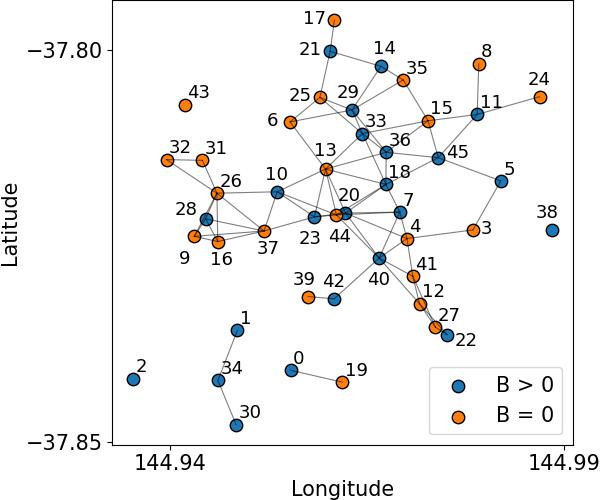}
        \caption{Bike docking stations in the city of Melbourne.}
        \label{fig:bike-map}
    \end{subfigure}
    \hspace{0.1\linewidth}
    \begin{subfigure}[t]{0.35\linewidth}
        \centering
        \includegraphics[width=\linewidth]{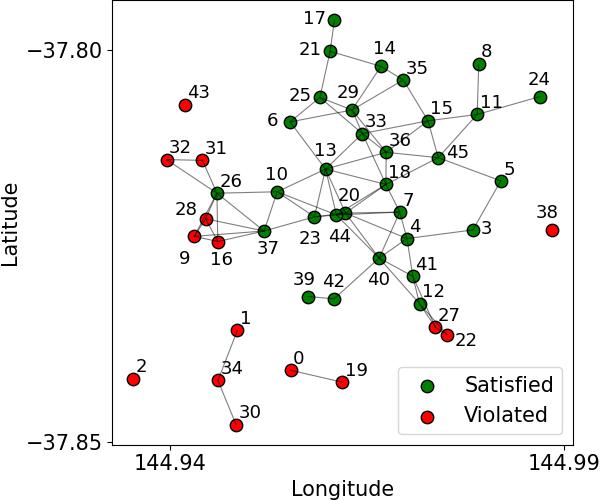}
        \caption{$\psi_1 = \sr_{1000,2000} (\varphi_1)$}
        \label{fig:bike-psi1}
    \end{subfigure}

    \vspace{1em} 

    \begin{subfigure}[t]{0.35\linewidth}
        \centering
        \includegraphics[width=\linewidth]{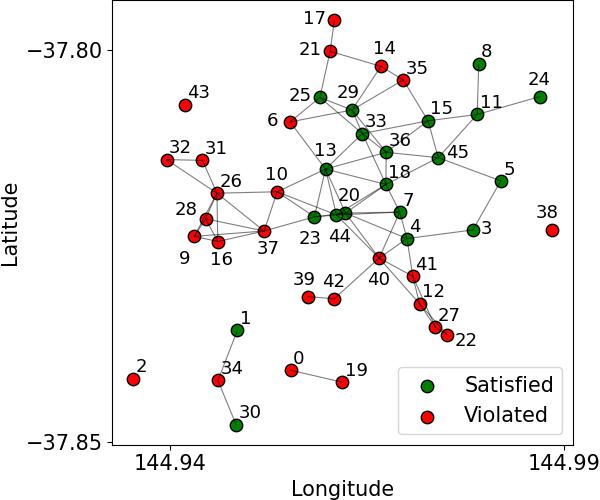}
        \caption{$\psi_2 = \sr_{1000,1000}(\boxbox_{[0,500]} \varphi_1)$}
        \label{fig:bike-psi2}
    \end{subfigure}
    \hspace{0.1\linewidth}
    \begin{subfigure}[t]{0.35\linewidth}
        \centering
        \includegraphics[width=\linewidth]{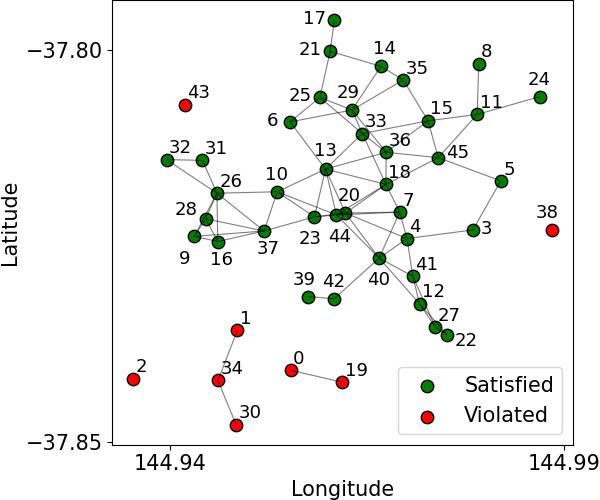}
        \caption{$\psi_3 = \diamonddiamond_{[0,1000]} \sr_{1000,2000}(\varphi_1)$}
        \label{fig:bike-psi3}
    \end{subfigure}

    \caption{In the graphs, each node represents a bike docking station, uniquely identified by a corresponding index.}
\end{figure*}
Microgrids interconnect within diverse power distribution systems to create Networked Microgrids (NMGs), which are designed to exchange electrical power, improve stability, optimize power flow, and provide a sustained supply to areas impacted by outages.
In a mesh topology, multiple neighboring microgrids in a region connect to form a network that allows for power sharing among all members~\cite{Islam2021microgrid, Aghmadi2023microgrid}. 
Microgrids communicate and exchange power via redundant electrical or communication links. 
Each microgrid in an NMG relies on its own distributed energy resources (DERs) to meet local demand. 
If a fault compromises a microgrid's ability to meet its load, it can import power from other interconnected microgrids to facilitate recovery. 
Furthermore, this support mechanism can maintain a power-sufficient state over long distances by leveraging a series of microgrids along the underlying transmission route.

Figure~\ref{fig:mc-example}(a) depicts an a-spatial model of an NMG comprising 10 microgrids. 
The node indices are shown in the figure. 
The edge weights (not displayed in the figure) are (0,1): 821.38, (0,2): 1765.03, (0,4): 2273.58, (0,9): 2181.11, (1,2): 1349.47, (1,4): 2301.07, (1,7): 2471.35, (1,9): 1416.06, (2,9): 2126.95, (3,4): 747.33, (3,5): 761.37, (3,7): 856.27, (3,8): 2351.09, (4,5): 1329.83, (4,6): 2106.47, (4,7): 382.98, (4,9): 2380.34, (5,7): 1214.07, (5,8): 1674.01, (6,7): 1770.92, (6,8): 2266.49, (6,9): 1941.83, (7,8): 2276.86, (7,9): 2324.59.
Let $\varphi=supply\geq demand$ specifies that supply is greater than or equal to demand at a given microgrid.
In Figure~\ref{fig:mc-example}(a), microgrids satisfying and violating this condition are indicated by a specific color code, respectively. 
We proceed to evaluate the following SpaRS formulas.

Formula $\psi_1=\sr_{d_1,d_2}(\varphi)$ specifies that there exists a microgrid that satisfies $supply\geq demand$ within distance $d_1$, and this supply-efficient state can be maintained for at least distance $d_2$.

Formula $\psi_2= \sr_{d_3,d_4} (\boxbox_{[0,d_5]}\varphi)$ specifies that there exists a microgrid that satisfies $\boxbox_{[0,d_5]} supply$ $\geq demand$ within distance $d_3$, and this property can be maintained for at least distance $d_4$.

Formula $\psi_3= \diamonddiamond_{[0,d_6]} \sr_{d_7,d_8}(\varphi)$ specifies that there exists a microgrid within distance $d_6$ from location $l$ which satisfies $\sr_{d_7,d_8}(\varphi)$.

\begin{table}[t]
\setlength{\tabcolsep}{3.5pt}
    \centering
    \begin{tabular}{|c|c|c|c|}
    \hline
    & 1 & 4 & 5 \\ \hline
    $\psi_1$ & \{(-349.47,6555.60)\} & \{(1000,5175.97)\} & \{(238.63,5175.97)\}  \\  \hline
    $\psi_2$ & \{(-349.47,3068.79)\} & \{(-1106.47,3068.79)\} & \{(-1984.99,3068.79)\} \\ \hline
    $\psi_3$ & \{(1000,6555.60)\} & \{(1000,5175.97)\} & \{(1000,5175.97)\}   \\ \hline
    \end{tabular}
    \caption{SpaRV values of $\psi_1, \psi_2, \psi_3$ at locations $1,4,5$}
    \label{tab:mc-values}
    \vspace{-0.8cm}
\end{table}


We show the truth values of $\psi_1, \psi_2, \psi_3$ at each microgrid in Figure~\ref{fig:mc-example}(b)-(d). 
For a location, if the calculated semantics include a pair $(x_r, x_p)$, where $x_r\geq0$ and $x_p\geq0$, then $\psi_1$ is satisfied at this location; otherwise, $\psi_1$ is violated.
Table~\ref{tab:mc-values} shows the SpaRVs values at select microgrid locations. 
For the semantics of $\psi_1$ at location~1, the corresponding route is $1\rightarrow2\rightarrow9\rightarrow6\rightarrow4\rightarrow9$. The distance between locations along this route is 1349.47, 2126.95, 1941.83, 2106.47, 2380.34. Thus, $x_r=1349.47-1000=349.47, x_p=2126.95+1941.83+2106.47+2380.34-2000=6555.59$. 
For $\psi_2$, we first use an SREL monitor to evaluate $\boxbox_{[0,1000]} \varphi$. Locations 2, 6, and 9 satisfy $\boxbox_{[0,1000]} \varphi$, whereas the other locations do not. 
We then apply the evaluation algorithm on $\psi_2$. 
For example, location~7 violates $\psi_2$ because it cannot find a route whose distance is less than 1,000 to reach locations 2, 6 or 9; even the shortest route $7\rightarrow6$ is 1,770.92 long. 
For $\psi_3$, location 0 violates it because only location 1 is within distance 1500, but neither location 0 or 1 satisfy $S_{1000, 2000}(\varphi)$. 
The computation times of evaluating SpaRVs for formulas $\psi_1, \psi_2$, and $\psi_3$ are (in seconds) 0.52, 0.84, and 1.47, respectively.  

\subsection{Bike Sharing System}

In an urban bike sharing system, bikes are distributed across a spatial network of docking stations. A station may run out of bikes when the demand exceeds the supply. 
A goal of the bike sharing system is to ensure resilience of service:  if a station is out of bikes, a nearby station with availability should exist (recoverability), and the presence of stations with bike availability should be maintained (post-recovery) for a reasonable distance (persistency).


Let $\varphi_1=B>0$, where $B$ denotes the number of bikes at a station. 
We specify the following SpaRS formulas.

Within distance $d_1$ there exists a station that has bikes, and from that station there exists a route of length at least $d_2$ along which all stations have bikes.   
\begin{equation*}
    \psi_1 = \sr_{d_1,d_2} (\varphi_1) 
\end{equation*}

Within distance $d_3$, one can reach a \emph{well-served station}---i.e., a station such that every station within distance $d_5$ of it has bikes---and
from there, the user can consecutively visit well-served stations along a route of length at least $d_4$.
\begin{equation*}
    \psi_2 = \sr_{d_3, d_4} (\boxbox_{[0,d_5]} \varphi_1 )
\end{equation*}

Within distance $d_6$, there is a station satisfying 
$\psi_1$.
\begin{equation*}
    \psi_3 = \diamonddiamond_{[0,d_6]} \sr_{d_1,d_2}(\varphi_1 )
\end{equation*}

We obtained docking station location data (longitude and latitude) and bike numbers of the city of Melbourne~\cite{melbourne_bike_docks}, and rendered it in  Figure~\ref{fig:bike-map}. We randomly set half of the docking stations to have no bikes, i.e., $B=0$.  Two stations are connected if they are at a distance of at most 800 meters apart. 

Figure~\ref{fig:bike-psi1} shows the truth value of $\psi_1$, where $d_1=1000, d_2=2000$.  
We find that most of the locations satisfy $\psi_1$. For example, for location~24, the best route is $24\rightarrow 11\rightarrow 45\rightarrow 36\rightarrow 29\rightarrow 33\rightarrow 18\rightarrow 7\rightarrow 40\rightarrow 20\rightarrow 23\rightarrow 10$. Location~11 is the first to satisfy $\varphi_1$ along this route. Thus, recoverability is associated with the prefix $24\rightarrow 11$, whose length is $rec=743.85$, whereas persistency is associated with the suffix $11\rightarrow 45\rightarrow \cdots \rightarrow 10$, whose length is $per=5919.27$. Finally, we have $(x_r, x_p)= (1000-rec, per-2000) = (256.15, 3919.27)$. Similarly, Figure~\ref{fig:bike-psi2} and~\ref{fig:bike-psi3} show the truth values of $\psi_2$ and $\psi_3$, respectively.
The computation times of the SpaRVs for formulas $\psi_1, \psi_2$, and $\psi_3$ are (in seconds) 1.12, 1.27, and 2.97, respectively.



\section{Related work}

The concept of spatial resilience permeates many research areas, including ecology~\cite{Allen2016Ecology, Cumming2011Ecology, Chambers2019Ecology}, chemistry~\cite{https://doi.org/10.1111/ele.12897, Zine2021chemistry}, and socio‐economic systems~\cite{doi:10.1177/0309132513518834, Christmann2012}. 
While these studies are domain-specific, this paper adopts a formal method-based approach to model and reason about spatial resilience in CPSs. 
Below, we review related work in two key areas: spatial logic for CPSs and the formal modeling of resilience.



Spatio-temporal logics have been developed to specify and analyze spatial behavior in CPS.  Early efforts, such as Signal Spatio-Temporal Logic (SSTL)~\cite{SSTL}, introduced basic spatial operators (e.g., somewhere, surround) on top of STL~\cite{Donz2010STL, Maler2004STL}. Beyond SSTL, more expressive formalisms have emerged. For example, SpaTeL~\cite{SpaTel} unifies STL with a tree-based spatial logic (TSSL) to describe high-level spatial patterns that evolve over time. In ~\cite{SpaTel}, SpaTeL is designed to compute the probability that a networked system satisfies a spatio-temporal property. Orthogonally, SaSTL~\cite{SaSTL} extends STL with spatial aggregation and counting operators, allowing specifications over distributed agents and regions. 
More recently, Spatio-Temporal Reach and Escape Logic (STREL)~\cite{2022STREL}  introduces  novel spatial operators in the form of reach and escape (see Section~\ref{SEC:SREL}). 
Balakrishnan et al.~\cite{Balakrishnan2025AFA} proposes an alternating finite automata construction for STREL specifications, enabling efficient offline and online monitoring of distributed CPS.
Based on the spatial fragment of STREL, our work introduces the concept of spatial resilience in terms of CPS recoverability and persistency. 

Researchers have formalized resilience in the temporal domain~\cite{Saoud2023Temporal,chen2022resilience,Hassan2024Resilience, Clark2019resilience}.
Saoud et al.~\cite{Saoud2023Temporal} present a notion of CPS resilience based on Linear Temporal Logic (LTL) that quantifies how much disturbance a system can tolerate while still satisfying an LTL specification.  
The STL-based Resilience Specification (SRS) framework proposed by Chen et al.~\cite{chen2022resilience} considers recoverability and durability for temporal signals. 
Hassan et al.~\cite{Hassan2024Resilience} develop Finite-Time Robust Control Barrier Functions (FR-CBFs) that impose explicit guarantees on both recovery time and post-recovery safety invariance for power inverter networks under worst-case disturbances and cyber-attacks.
Clark et al.~\cite{Clark2019resilience} present resilience metric to quantify the ability of the system to recover from an attack provided the attack is discovered within a fixed time interval, as well as the cost of recovery.
In contrast, our proposed SpaRS resiliency framework supports reasoning about recoverability and persistency in a spatial logic framework. 
Chen et al.~\cite{chen2025cumulative} introduced Cumulative-Time Signal Temporal Logic (CT-STL), a concept analogous to our notion of spatial persistency; however, it applies to temporal domains rather than spatial ones, and it measures accumulation over time instead of continuity over distance.
Other research has investigated resilient behavior, both spatial and temporal, for planning and control of cyber-physical systems~\cite{Chen2023ResilienC, Ivanov2016attackresilient, Wu2024resilience, chen2016resilientmicrogrids, chen2020automatica}, including autonomous vehicles, microgrids, and multi-agent systems. The focus of this paper is on the formal specification and reasoning of spatial resilience, while spatial resilient control is a direction for our future work.


\section{Conclusion}
\label{sec:conclusion}

In this paper, we presented a logical framework for reasoning about spatial resiliency in CPS.  We defined the resiliency of an SREL formula $\varphi$ as the ability of the system to recover from violations of $\varphi$ in a spatially efficient and durable manner. These requirements represent the atoms of our
SpaRS logic, which allows one to combine such resiliency primitives using spatial and Boolean operators. We also introduced SpaRV, the first multi-dimensional semantics for an SREL-based logic. Under this semantics, a SpaRS formula is interpreted as a set of non-dominated $(rec, per)$ pairs, which respectively quantify (in spatial units)
how quickly the underlying system recovers from a property violation and for how long it satisfies the property thereafter. Importantly, we proved that our SpaRV semantics is sound and complete w.r.t.\ SREL's Boolean semantics. 
We illustrated our new resiliency framework with two case studies: networked microgrids and an urban bike sharing system. 
Collectively, our results demonstrate the expressive power and flexibility of our framework in reasoning about spatial resiliency in CPS. 

In summary, the contributions of our work lie not just in establishing theoretical foundations of spatial resiliency, but also in providing a method to equip spatial logics with multi-dimensional semantics.
Such an approach could, in the future, be extended to support arbitrary multi-requirement specifications beyond resiliency.

\balance
\bibliographystyle{ACM-Reference-Format}
\bibliography{bib}

\end{document}